\newcommand{\comment}[1]{}
\newcommand{\blankbrac}{[ \cdot,\cdot ]}
\newcommand{\brac}[2]{[ #1,#2 ]}
\newcommand{\Hblank}{\lbrace \cdot,\cdot \rbrace_{\mathrm{h}}}
\newcommand{\Sblank}{\lbrace \cdot,\cdot \rbrace_{\mathrm{s}}}
\newcommand{\Hbrac}[2]{\left \lbrace #1,#2 \right\rbrace_{\mathrm{h}}}
\newcommand{\Sbrac}[2]{\left \lbrace #1,#2 \right\rbrace_{\mathrm{s}}}
\newcommand{\h}{\mathrm{h}}
\newcommand{\s}{\mathrm{s}}
\newcommand{\ham}{\mathrm{Ham}\left(X\right)}
\newcommand{\cinf}{C^\infty(X)}
\newcommand{\lie}[2]{L_{v_{#1}}\, #2}
\newcommand{\Lie}[2]{L_{v_{#1}}\,#2}
\newcommand{\U}{{\rm U}}
\newcommand{\R}{\mathbb{R}}
\newcommand{\Z}{\mathbb{Z}}
\newcommand{\ip}[1]{\iota_{v_{#1}}}
\newcommand{\ST}{\mathbb{R}^{1,d-1}}
\newcommand{\CYL}{\mathbb{R} \times S^{1}}
\newcommand{\vol}{{\rm vol}}
\newcommand{\maps}{\colon}    
\theoremstyle{plain}
\newtheorem{theorem}{Theorem}[section]
\newtheorem{prop}[theorem]{Proposition}
\newtheorem{definition}[theorem]{Definition}
\theoremstyle{definition}
\newtheorem{example}[theorem]{Example}
\title{Categorified Symplectic Geometry \\ and the Classical String \\}
\author{John C.\ Baez, Alexander E.\ Hoffnung, Christopher L.\ Rogers \\ 
\\
Department of Mathematics, 
University of California, 
\\ Riverside, California 92521, USA}
\date{August 2, 2008}
\begin{document}

\maketitle

\begin{abstract}
\noindent
A Lie 2-algebra is a `categorified' version of a Lie algebra: that is,
a category equipped with structures analogous those of a Lie algebra,
for which the usual laws hold up to isomorphism.  In the classical
mechanics of point particles, the phase space is often a symplectic
manifold, and the Poisson bracket of functions on this space gives a
Lie algebra of observables.  Multisymplectic geometry describes an
$n$-dimensional field theory using a phase space that is an 
`$n$-plectic manifold': a finite-dimensional manifold equipped with
a closed nondegenerate $(n+1)$-form.  Here we consider the case $n = 2$.  
For any 2-plectic
manifold, we construct a Lie 2-algebra of observables.  We then
explain how this Lie 2-algebra can be used to describe the dynamics of
a classical bosonic string.  Just as the presence of an
electromagnetic field affects the symplectic structure for a charged
point particle, the presence of a $B$ field affects the 2-plectic
structure for the string.
\end{abstract}

\section{Introduction}
\label{introduction}

It is becoming clear that string theory can be viewed as a
`categorification' of particle physics, in which familiar algebraic
and geometrical structures based in set theory are replaced by their
category--theoretic analogues.  The basic idea is simple.  While a
classical particle has a position nicely modelled by an element 
of a set, namely a point in space:
\[          \bullet  \]
the position of a classical string is better modelled by a morphism in a 
category, namely an unparametrized path in space:
\[
\xymatrix{
   \bullet \ar@/^1pc/[rr]^{}
&& \bullet
}
\]
Similarly, while particles have worldlines in spacetime,
which can be thought of as morphisms, strings have worldsheets, 
which can be thought of as 2-morphisms.

So far this viewpoint has been most fruitful in studying the
string-theoretic generalizations of gauge theory 
\cite{BaezSchreiber:2005}.   The first clue was the $B$ field 
in string theory.  The electromagnetic field contributes to the
change in phase of a charged particle as it moves through spacetime.  
This field is locally described by a 1-form $A$, which we integrate 
along the particle's worldline to compute a phase change.
The $B$ field contributes to the phase change of a charged
string in a similar way: it is locally described by a 2-form,
which we integrate over a string's worldsheet.  
When we seek a global description suitable for nontrivial spacetime
topologies, the electromagnetic field is better thought of 
as a connection on a $\U(1)$ bundle.  Similarly, the $B$-field 
is globally described by a connection on the categorified version of 
a $\U(1)$ bundle, namely a $\U(1)$ gerbe 
\cite{Brylinski,Freed:2001,Freed-Witten,Zunger}.   

Later it was found that connections on nonabelian gerbes also play a
role in string theory \cite{AschieriCantiniJurco:2003,
AschieriJurco:2004, BreenMessing:2001}.  Nonabelian gerbes are a
special case of 2-bundles: that is, bundles with a smooth category
rather than smooth manifold as fiber \cite{Bartels:2004}.  To
understand connections on general 2-bundles, it was necessary to
categorify the concepts of Lie group and Lie algebra, obtaining the
notions of Lie 2-group \cite{BCSS,BaezLauda:2003} and Lie 2-algebra
\cite{BaezCrans:2004, Roytenberg}.

Still more recently, {\it iterated} categorification has become
important in understanding the generalizations of gauge theory
suitable for higher-dimensional membranes
\cite{Diaconescu-Moore-Freed,SSS,SSS2}.  It is clear by now that to
understand the behavior of these membranes, we need to
study $n$-connections on $n$-bundles: that is, structures analogous to
connections that live on something like a bundle with a smooth
$n$-category as fiber.  In the very simplest case --- a topologically
trivial $n$-bundle with the simplest nontrivial abelian `$n$-group'
playing the role of gauge group --- an $n$-connection is just an
$n$-form on the base space.  In a straightforward generalization of
electromagnetism, the integral of this $n$-form over the membrane's
`worldvolume' contributes to its change in phase. 

Given all this, we should expect that as we look deeper into the 
analogy between point particles, strings, and higher-dimensional
membranes, we should find more examples of categorification.  
Perhaps the most obvious place to look is symplectic geometry.
The reason is that symplectic geometry {\it also} uses a connection
on a $\U(1)$ bundle to describe the change of phase of a point particle.  

The simplest example is a free particle moving in some Lorentzian
manifold $M$ representing spacetime.  If we keep track of the 
particle's momentum as well as its position, it traces out a path 
in the cotangent bundle $X = T^* M$.  The cotangent bundle is equipped 
with a canonical 1-form $\alpha$, and we can integrate $\alpha$ 
over this path to determine the particle's change of phase.  
This is not the historical reason why $X$ is called a `phase 
space', but the coincidence is a happy one.

The exterior derivative $\omega = d\alpha$ plays an important role in
this story.  First, by Stokes' theorem, the integral of this 2-form
over any disc in $X$ measures the change of phase of a particle as it
moves around the boundary of the disc.  A deeper fact is that $\omega$
is a symplectic structure: that is, not only closed but nondegenerate.
This lets us take any smooth function $F \maps X \to \R$ and find a
unique vector field $v_F$ such that
\[                   \ip{F} \omega = -dF  \]
where $\iota$ stands for interior product.  We should think of $F$ as an 
`observable' assigning a number to any state of the particle.  In good 
situations, the vector field $v_F$ will generate a one-parameter group 
of symmetries of $X$: that is, a flow preserving the symplectic structure 
$\omega$.  So, the symplectic nature of $\omega$ guarantees that {\it 
observables give rise to symmetries}.  Moreover, by measuring how rapidly one
observable changes under the one-parameter group of symmetries
generated by another, we obtain a binary operation on observables, the
Poisson bracket:
\[           \{F,G\} = L_{v_F} G  \]
where $L$ stands for Lie derivative.  This makes the vector space of
observables into a Lie algebra.  

Symplectic geometry generalizes this idea by replacing $T^* M$
with a more general phase space $X$.  We could simply let 
$X$ be any manifold equipped with a 1-form $\alpha$ such that 
$\omega = d\alpha$ is symplectic.  However, a 1-form is the same as a 
connection on a trivial $\U(1)$ bundle, and $\omega$ is then the
curvature of this connection.  Since physics is local, it makes 
more sense to equip $X$ with a {\it locally} trivial $\U(1)$ bundle 
$P \to X$, together with a connection on $P$ whose curvature 2-form 
$\omega$ is symplectic.  This is the basic context for geometric 
quantization.

We can study symplectic geometry without assuming that the symplectic
2-form $\omega$ is the curvature of a connection on some $\U(1)$
bundle.  In particular, we still obtain a Lie algebra of observables
using the formulas above.  But some of the physical meaning of the
symplectic structure only reveals itself in the presence of a $\U(1)$
bundle: namely, that the integral of $\omega$ over any disc in $X$
measures the change of phase of a particle as it moves around the
boundary of this disc.  So, in geometric quantization the $\U(1)$
bundle is crucial.  We can build such a bundle whenever we can 
lift the de Rham cohomology class $[\omega] \in H^2(X,\R)$ to 
an element of the integral cohomology $H^2(X,\Z)$.

Now let us consider how all this generalizes when we move from point
particles to strings.  As a first step towards understanding this, let
us return to the point particle moving in a spacetime manifold $M$.
We have said that the particle's phase changes in a way described by
integrating the canonical 1-form $\alpha$ along its path in $T^* M$.
However, in the presence of the electromagnetic field there is an
additional phase change due to electromagnetism, at least when the
particle is charged.  To take this into account, we add to $\alpha$
the 1-form $A$ describing the electromagnetic field, pulled back from
$M$ to $T^* M$.  We then redefine the symplectic structure to be
$\tilde{\omega} = d(\alpha + A)$.  So, {\it electromagnetism affects the
symplectic structure on the cotangent bundle of spacetime}.  A more
detailed account of this can be found in the book by Guillemin and
Sternberg \cite{GS}.

This suggests that when we pass from point particles to strings, and 
the electromagnetic field is replaced by the $B$ field, we should
correspondingly adjust our concept of `symplectic structure'.  Instead 
of a canonical 1-form, we should have some sort of canonical 2-form
on phase space, so we can add the $B$ field to this 2-form.    But 
this in turn suggests that the analogue of the symplectic structure 
will be a 3-form!

This raises the puzzle: {\it how can we generalize symplectic geometry
with a 3-form replacing the usual 2-form?}

Amusingly, the answer is very old: it goes back to the work of DeDonder
\cite{DeDonder} and Weyl \cite{Weyl} in the 1930s.  Their ideas have
been more fully developed in the subject called `multisymplectic
geometry'.  For an introduction, try for example the papers by Gotay,
Isenberg, Marsden and Montgomery \cite{GIMM}, H\'elein and Kouneiher
\cite{Helein,Helein-Kouneiher}, Kijowski \cite{Kijowski}, and Rovelli
\cite{Rovelli}.  In particular, Gotay {\it et al} have already applied
multisymplectic geometry to classical string theory.  There are
various ways to do this.  In this introduction we take a very naive
approach, which will be corrected in Section \ref{multisymplectic}.

To begin with, note that just as the position and velocity of 
a point particle in the spacetime $M$ are given by a point in
the tangent bundle $TM$, we could try to describe the position and 
velocity of a string by a point in $\Lambda^2 TM$ --- that is, a point 
in $M$ together with a tangent {\it bivector}.  Similarly, just as
the position and momentum of a particle are given by a point in 
$T^* M$, we could try to describe the position and momentum of a 
string by a point in $\Lambda^2 T^*M$.  

Just as $T^* M$ is equipped with a canonical 1-form, the generalized
phase space $X = \Lambda^2 T^* M$ is equipped with a canonical 2-form
$\alpha$, as described in Example \ref{Poincare-Cartan} below.  The
corresponding 3-form $\omega = d \alpha$ is `multisymplectic', meaning
that it is closed and also nondegenerate in the following sense:
\[             \iota_v \omega = 0 \Rightarrow v = 0  \]
for all vector fields $v$.   This means that for any 1-form 
$F$, there is at most one vector field $v_F$ such that
\[                   \ip{F} \omega = -dF . \]
This resembles the equation we have already seen in symplectic
geometry, which associates symmetries to observables.  But there is a
difference: now $v_F$ may not exist.  So, we should consider
a 1-form $F$ on $X$ to be an observable only when there exists a
vector field $v_F$ satisfying the above equation.

We can then define a Poisson bracket of observables by the usual
formula:
\[           \{F,G\} = L_{v_F} G . \]
The result is always another observable.  But, we do not obtain
a Lie algebra of observables, because this Poisson bracket is only
antisymmetric {\it up to an exact 1-form}.  Exact 1-forms are
always observables, but they play a special role, since they 
give rise to trivial symmetries: if $F$ is exact, $v_F = 0$.   

This suggests that in the stringy analogue of symplectic geometry
we should seek, not a Lie algebra of observables, but a 
{\it Lie 2-algebra} of observables --- that is, a category 
resembling a Lie algebra, with observables as objects.  In this
category two observables $F$ and $G$ will be deemed `isomorphic' 
if they differ by an exact 1-form.  This guarantees that they generate
the same symmetries: $v_F = v_G$.

Indeed, such a Lie 2-algebra exists.  After reviewing multisymplectic
geometry in Section \ref{multisymplectic}, we prove
in Thm.\ \ref{hemistrict} that for any manifold $X$ equipped 
with a closed nondegenerate 3-form $\omega$, there is a Lie 2-algebra 
for which:
\begin{itemize}
\item An object is a 1-form $F$ on $X$ for which there exists
a vector field $v_F$ with  $\ip{F} \omega = -dF$.
\item A morphism $f \maps F \to F'$ is a function $f$ such that $F + df = F'$.
\item The bracket of objects $F,G$ is $L_{v_F} G$.  
\end{itemize}

On a more technical note, this Lie 2-algebra is `hemistrict' in the
sense of Roytenberg \cite{Roytenberg}.  This means that the Jacobi
identity holds on the nose, but the skew-symmetry of the bracket holds
only up to isomorphism.  In Thm.\ \ref{semistrict} we construct
another Lie 2-algebra with the same objects and morphisms, where the
Lie bracket of observables is given instead by $\ip{G} \ip{F} \omega$.
This Lie 2-algebra is `semistrict', meaning that the bracket is
skew-symmetric, but the Jacobi identity holds only up to isomorphism.  
In Thm.\ \ref{isomorphism} we show that these two Lie 2-algebras are
isomorphic.  This may seem surprising at first, but the notion of
`isomorphism' for Lie 2-algebra is sufficiently supple that
superficially different Lie 2-algebras --- one hemistrict, one
semistrict --- can be isomorphic.

In Section \ref{strings}, we apply these ideas to the classical
bosonic string propagating in Minkowski spacetime.  Following standard
ideas in multisymplectic geometry, we replace $\Lambda^2 T^* M$ with a
more sophisticated 2-plectic manifold: the first cojet bundle of the
bundle $\Sigma \times M \to \Sigma$, where $\Sigma$ is a surface
parametrizing the string worldsheet.  We explain how to derive the
equations of motion for the string from a 2-plectic formulation
involving this phase space.  We describe an observable 1-form $H$ on
this phase space whose corresponding vector field $v_H$ generates time
evolution.  We also describe how the presence of a $B$ field modifies
the 2-plectic structure.  Finally, we list some open questions in
Section \ref{conclusions}.

\section{Multisymplectic Geometry}
\label{multisymplectic}

The idea of multisymplectic geometry is simple and beautiful:
associated to any $n$-dimensional classical field theory there 
is a finite-dimensional `extended phase space' $X$ equipped 
with a nondegenerate closed $(n+1)$-form $\omega$. When $n = 1$,
we are back to the classical mechanics of point particles and 
ordinary symplectic geometry.  When $n = 2$, the examples include
classical bosonic string theory, as explained in Section \ref{strings}.

However, at this point an annoying terminological question intrudes:
what do we call multisymplectic geometry for a fixed value of 
$n$?  The obvious choice is `$n$-symplectic geometry', but
unfortunately, this term already means something else \cite{Cartin}.  
So, until a better choice comes along, we will use the term 
`$n$-plectic geometry':

\begin{definition}
An $(n+1)$-form $\omega$ on a $C^\infty$ manifold $X$ is 
{\bf multisymplectic}, or more specifically
an \textbf{n}-{\bf plectic structure}, if it is both closed:
\[
    d\omega=0,
\]
and nondegenerate:
\[
    \forall v \in T_{x}X,\ \iota_{v} \omega =0 \Rightarrow v =0
\]
where we use $\iota_v \omega$ to stand for the interior product
$\omega(v, \cdot, \dots, \cdot)$.  
If $\omega$ is an $n$-plectic form on $X$ we call the pair $(X,\omega)$ 
a {\bf multisymplectic manifold}, or \textbf{n}-{\bf plectic manifold}.
\end{definition}

The references already provided contain many examples of
multisymplectic manifolds.  More examples, together with constraints
on which manifolds can admit $n$-plectic structures, have been
discussed by Cantrijn \textit{et al} \cite{Cantrijn:1999} and Ibort
\cite{Ibort:2000}.  Here we give four well-known examples.  

The first example arises in work related to the Wess--Zumino--Witten
model and loop groups:

\begin{example}
\label{lie.group}
If $G$ is a compact simple Lie group, there is a 
3-form $\omega$ on $G$ that is invariant under both left and 
right translations, which is unique up to rescaling.  It is given by
\[       \omega(v_1, v_2, v_3) = \langle v_1, [v_2, v_3] \rangle \]
when $v_i$ are tangent vectors at the identity of $G$ 
(that is, elements of the Lie algebra), and $\langle \cdot, \cdot \rangle$
is the Killing form.   This makes $(G,\omega)$ into a 2-plectic manifold.  
\end{example}

The second was already mentioned in the Introduction:

\begin{example}
\label{Poincare-Cartan}
Suppose $M$ is a smooth manifold, and let $X = \Lambda^n T^\ast M$ be the 
$n$th exterior power of the cotangent bundle of $M$.  Then there
is a canonical $n$-form $\alpha$ on $X$ given as follows: 
\[  \alpha(v_1, \dots, v_n) = x(d\pi(v_1), \dots, d\pi(v_n))  \]
where $v_1, \dots v_n$ are tangent vectors at the point $x \in X$, and
$\pi \maps X \to M$ is the projection from the bundle $X$ to the base
space $M$.  Note that in this formula we are applying the $n$-form $x
\in \Lambda^n T^\ast M$ to the $n$-tuple of tangent vectors
$d\pi(v_i)$ at the point $\pi(x)$.  The $(n+1)$-form
\[           \omega = d\alpha   \]
is $n$-plectic.  

Indeed, this can be seen by explicit computation.  Let $q^1, \dots,
q^d$ be coordinates on an open set $U \subseteq M$.  Then there is a basis
of $n$-forms on $U$ given by $dq^I = dq^{i_1} \wedge \cdots \wedge dq^{i_n}$
where $I = (i_1, \dots, i_n)$ ranges over multi-indices of length $n$.
Corresponding to these $n$-forms there are fiber coordinates $p_I$ which
combined with the coordinates $q^i$ pulled back from the base give a
coordinate system on $\Lambda^n T^* U$.  In these coordinates we have
\[    \alpha = p_I dq^I , \]
where we follow the Einstein summation convention to sum over
repeated multi-indices of length $n$.  It follows
that
\[    \omega = dp_I \wedge dq^I  .\]
Using this formula one can check that $\omega$ is indeed $n$-plectic.
\end{example}

The next example, involving an $n$-plectic manifold called
$\Lambda_1^n T^* E$, may seem like a technical variation on the theme
of the previous one.  However, it is actually quite significant, since
$n$-plectic manifolds of this sort serve as the extended phase spaces
for many classical field theories \cite{Carinena-Crampin-Ibort, GIMM,
Rovelli}.  In Section \ref{strings}, we use a 2-plectic manifold of
this sort as the extended phase space for the classical bosonic
string.

\begin{example}
\label{horizontal}
Let $\pi \maps E \to \Sigma$ be a fiber bundle over an $n$-dimensional
manifold $\Sigma$.  Given a point $y \in E$, a tangent vector $v \in T_y E$ 
is said to be {\bf vertical} if $d \pi(v) = 0$.  
There is a vector sub-bundle $\Lambda^{n}_{1}T^{\ast}E$ of the 
$n$-form bundle $\Lambda^n T^\ast E$ whose fiber at $y \in E$ 
consists of all $\beta \in \Lambda^n T^\ast_y E$ such that
\[          \ip{1}\ip{2} \beta =0   \]
for all vertical vectors $v_1,v_2 \in T_yE$.  
Let $i\maps \Lambda^{n}_{1}T^* E \hookrightarrow \Lambda^n T^* E$ 
denote the inclusion.  Let $\omega = d\alpha$ be the $n$-plectic
form defined in Example \ref{Poincare-Cartan}.  Then the pullback 
$i^* \omega$ is an $n$-plectic form on $\Lambda_1^n T^\ast E$.

Again, this can be seen by explicit calculation.  In our application 
to strings, $E$ will be a trivial bundle
$E = \Sigma \times M$ over $\Sigma$, and $\Sigma$ will be
equipped with a volume form.  It is enough to consider this case,
because proving that $i^* \omega$ is $n$-plectic is a local calculation,
and we can always trivialize $E$ and equip $\Sigma$ with a volume
form {\it locally}.

Let $q^1, \dots, q^n$ be local coordinates on $\Sigma$ and let
$u^1, \dots , u^d$ be local coordinates on $M$.  Then $\Lambda_1^n T^\ast E$
has a local basis of sections given by $n$-forms of two types: first, the 
wedge product of all $n$ cotangent vectors of type $dq^i$:
\[    dq^1 \wedge \cdots \wedge dq^n  \]
and second, wedge products of $n-1$ cotangent vectors of type $dq^i$ 
and a single one of type $du^a$:
\[   
dq^1 \wedge \cdots \wedge \widehat{dq^i} \wedge \cdots \wedge dq^n \wedge du^a . \]
Here the hat means that we omit the factor of $dq^i$.  
If $y = (x,u) \in \Sigma \times M$, this basis gives an isomorphism
\[   \Lambda^n_1 T^*_y E \;\; \cong \;\; \Lambda^n T^*_x \Sigma \; \oplus \;
\Lambda^{n-1} T^*_x \Sigma \otimes T^*_u M .\]

In calculations to come, it will be better to use the pulled back
volume form $\pi^* \vol$ as a substitute for the coordinate-dependent 
$n$-form $dq^1 \wedge \cdots \wedge dq^n$ on $E$.  This gives another basis of
sections of $\Lambda_1^n T^* E$, which by abuse of notation we call
\[  dQ = \pi^* \vol   \]
and
\[  dQ_i^a = 
\left(\pi^* \iota_{\partial/\partial q^i} \vol\right) \wedge du^a .\]
Corresponding to this basis there are
local coordinates $P$ and $P^i_a$ on $\Lambda_1^n T^* E$, which
combined with the coordinates $q^i$ and $u^a$ pulled back from $E$
give a local coordinate system on $\Lambda_1^n T^* E$.  In these
coordinates we have:
\begin{eqnarray}
\label{canonical.form}
          i^* \alpha = P dQ + P^i_a dQ_i^a,  
\end{eqnarray}
where again we use the Einstein summation convention.  It follows
that
\begin{eqnarray} 
\label{multisymplectic.form}
         i^* \omega = dP \wedge dQ + dP^i_a \wedge dQ_i^a  .
\end{eqnarray}
Using this formula one can check that $i^* \omega$ is indeed $n$-plectic.
\end{example} 

The manifold $\Lambda_1^n T^\ast E$ may seem rather mysterious, but
the next example shows that under good conditions it is isomorphic to
the `first cojet bundle' $J^1 E^\star$.  A point in the first jet bundle
$J^1 E$ records the value and first derivative of a section of $E$ at
some point of the base space $\Sigma$.  So, a first-order Lagrangian $\ell$ 
for a field theory where fields are sections of $E$ is a function $\ell
\maps J^1 E \to \R$.  $J^1 E$ is thus the natural home for the
Lagrangian approach to such field theories.  Similarly, the first
cojet bundle $J^1 E^\star$ is the natural home for the DeDonder--Weyl
Hamiltonian approach to field theory.  In particular, the isomorphism
\[   J^1 E^\star \cong \Lambda_1^n T^\ast E   \]
makes the first cojet bundle into an $n$-plectic manifold.

In the classical mechanics of a point particle, we can take $\Sigma =
\R$ to represent time and take $M$ to be some manifold representing
space.  Then $E = \Sigma \times M$ is the total space of a
trivial bundle $E \to \Sigma$, and a section of this bundle describes
the path of a particle in space.  The first jet bundle $J^1E$ is the
bundle $\R \times TM$ over the `extended configuration space' $\R
\times M$.  On the other hand, the first cojet bundle $J^1E^{\star}$
is isomorphic as a symplectic manifold to $T^{\ast}(\R \times M)$.
This is the familiar `extended phase space' for a particle in the
space $M$.

For a more relativistic picture, we may instead take $\Sigma = \R$ to
be the parameter space for the path of a particle moving in a manifold
$M$ representing \textit{spacetime}.  As before, $E = \Sigma \times M$
is the the total space of a trivial bundle $E \to \Sigma$, but now a
section of this bundle describes the worldline of a particle in
spacetime.

In Section \ref{strings} we modify this picture a bit further by
letting $\Sigma$ be 2-dimensional, so it represents the parameter
space for a \textit{string} moving in $M$.  We again let $E = \Sigma
\times M$ be the total space of a trivial bundle $E \to \Sigma$.  Now
a section of $E$ describes the worldsheet of a string --- and as we
shall see, the 2-plectic manifold $J^1E^{\star}$ serves as a kind of
`extended phase space' for the string.

\begin{example}
\label{jet.bundle}
As in the previous example, let $\pi \maps E \to \Sigma$ be a fiber bundle 
with $\dim \Sigma = n$.  Let $\Gamma_x (E)$ be the set of
smooth sections of $E$ defined in some neighborhood of the point
$x \in \Sigma$.  Given $\phi \in \Gamma_x (E)$, let $j^1_x \phi$ be the
equivalence class of sections whose first-order Taylor expansion
agrees with the first-order Taylor expansion of $\phi$ at the point $x$.  
The set 
\[  J^1 E= \left \{ j^1_x \phi
~ \vert ~ x \in \Sigma,  \phi \in \Gamma_x(E) \right \} \]
is a manifold.  Moreover, $J^1 E$ is the total space of a
fiber bundle $\pi_J \maps J^1 E \to E$, the {\bf first jet bundle} of 
$E$, where
\[        \pi_J \left(j^1_x \phi \right) = \phi(x).  \]

To see these facts it suffices to work locally, so suppose $E = \Sigma
\times M$.  Let $q^i$ be local coordinates on $\Sigma$ and let $u^a$
be local coordinates on $M$.  These give rise to local coordinates on
$J^1 E$ such that the coordinates for a point $j^1_{x} \phi \in J^1 E$
are $(q^i, u^a, u^a_i)$, where:
\[ 
q^i = q^i(x) , \quad 
u^a = (u^a \circ \phi)(x), \quad
u^a_i = 
\displaystyle{ \frac{\partial u^a \circ \phi}{\partial q^i}(x)} .
\]
The projection $\pi_J$ sends the point with coordinates $(q^i, u^a, u^a_i)$
to the point with coordinates $(q^i, u^a)$, so $\pi_J \maps J^1 E \to E$
is indeed a fiber bundle.

Let $y = (x,u) \in E$. The fiber of $J^1 E$ over $y$ is
\[J^1_y E \cong 
\{ A \maps T_x \Sigma \to T_yE  ~\vert~ d\pi \circ A = 1 \} \]
where $1$ is the identity map on $T_x \Sigma$.  This is not naturally
a vector space, but it is an affine space.  To see this, note that a
difference of two maps $A, A' \maps T_x \Sigma \to T_yE$ lying in
$J^1_y E$ is the same as a linear map from $T_x \Sigma$ to the space
$V_y E$ consisting of vertical vectors at the point $y \in E$.  Thus
$J^1_y E$ is an affine space modeled on the vector space $T^*_x \Sigma
\otimes V_y E$, and $J^1 E$ is a bundle of affine spaces.  (For details,
see Saunders \cite{Saunders}.)

Let $J^1_y E^\star$ be the {\bf affine dual} of $J^1_y E$, that is,
the vector space of affine functions from this affine space
to $\R$.  There is a vector bundle $J^1 E^\star$ over $E$,
the {\bf first cojet bundle} of $E$, whose fiber over $y \in E$ is
$J^1_y E^\star$.  In fact, a volume form on $\Sigma$ determines
a vector bundle isomorphism
\[             J^1 E^\star \cong \Lambda_1^n T^\ast E .\]
With the help of the $n$-plectic structure on $\Lambda_1^n T^\ast E$ 
described in the previous example, this gives an $n$-plectic 
structure on $J^1 E^\star$.

The above isomorphism has been explained by
Cari\~{n}ena, Crampin, Ibort \cite{Carinena-Crampin-Ibort}
and Gotay {\it et al} \cite{GIMM}.  For us it will be
enough to describe it when $E$ is a trivial
bundle over $\Sigma$, say $E = \Sigma \times M$,
and $\Sigma$ is equipped with a volume form, $\vol$.
Using this extra structure, in Example \ref{horizontal}
we constructed a specific isomorphism
\[   \Lambda^n_1 T^*_y E \;\; \cong \;\; \Lambda^n T^*_x \Sigma \; \oplus \;
\Lambda^{n-1} T^*_x \Sigma \otimes T^*_u M \]
where $y = (x,u) \in \Sigma \times M$.
The volume form on $\Sigma$ also determines isomorphisms
\[  
\begin{array}{ccl}
\R &\stackrel{\sim}{\to}& \Lambda^n T^*_x \Sigma \\
                  c &\mapsto & c \, \vol_x  
\end{array}
\] 
and 
\[
\begin{array}{ccl}
   T_x \Sigma   &\stackrel{\sim}{\to} & \Lambda^{n-1} T^*_x \Sigma \\
         v      &\mapsto &    \iota_v \vol_x  .
\end{array}
\]
We thus obtain an isomorphism 
\[   \Lambda^n_1 T^*_y E \;\; \cong \;\; \R \; \oplus \; 
T_x \Sigma \otimes T^*_u M .\]
On the other hand, the trivialization $E = \Sigma \times M$ gives
an isomorphism of affine spaces
\[ J^1_y E \; \; \cong \; \; T^*_x \Sigma \otimes T_u M   \]
which has the side-effect of making $J^1_y E$ into a vector 
space.   When an affine space $V$ happens to be a vector space, 
we have an isomorphism $V^\star \cong \R \oplus V^*$, since 
an affine map to $\R$ is a linear map plus a constant.
So, we obtain
\[   J^1_y E^\star \; \; \cong \; \; \R \; \oplus \; 
T_x \Sigma \otimes T^*_u M .\] 
This gives a specific vector bundle isomorphism
$J^1 E^\star \cong \Lambda_1^n T^\ast E$, as desired. 

It will be useful to see this isomorphism in terms of local coordinates.
We have already described local coordinates
$(q^i, u^a, u^a_i)$ on $J^1 E$.  Taking the affine dual of each fiber,
we obtain local coordinates $(q^i, u^a, P_a^i, P)$ on $J^1 E^\star$.  
We described local coordinates with the same names
on $\Lambda^n_1 T^* E$ in Example \ref{horizontal}.  In terms of
these coordinates, the isomorphism is given simply by
\[  
\begin{array}{ccl}
     J^1 E^\star & \stackrel{\sim}{\to}& \Lambda_1^n T^* E  \\ 
(q^i, u^a, P_a^i, P)  & \mapsto & (q^i, u^a, P_a^i, P) 
\end{array}
\]  
Using this isomorphism to transport the $(n-1)$-form
$i^* \alpha$ given by Eq.\ (\ref{canonical.form}) from
$\Lambda_1^n T^* E$ to $J_1 E^\star$, we obtain this 
differential form:
\begin{eqnarray}
\label{canonical}
   \theta  = P dQ + P^i_a dQ_i^a 
\end{eqnarray}
on $J_1 E^\star$.  Differentiating, it follows that
\begin{eqnarray}
\label{nplectic}
   d\theta  = dP \wedge dQ + dP^i_a \wedge dQ_i^a .
\end{eqnarray}
is an $n$-plectic structure on $J^1 E^\star$.
\end{example}

\section{Poisson Brackets}
\label{Poisson}

Next we quickly review how to generalize Poisson brackets of
observables from symplectic geometry to multisymplectic geometry.
Ordinary Hamiltonian mechanics corresponds to 1-plectic geometry, and
in this case, observables are smooth functions on phase space.  In
$n$-plectic geometry, observables will be smooth $(n-1)$-forms --- but
not all of them, only certain `Hamiltonian' ones:

\begin{definition}
Let $(X,\omega)$ be an $n$-plectic manifold.  An $(n-1)$-form $F$ on $X$ 
is {\bf Hamiltonian} if there exists a vector field $v_F$ on $X$ such that
\begin{eqnarray}
dF= -\ip{F} \omega.
\end{eqnarray}  
We say $v_F$ is the {\bf Hamiltonian vector field} corresponding to $F$. 
The set of Hamiltonian $(n-1)$ forms on a multisymplectic manifold is a 
vector space and is denoted as $\ham$.
\end{definition}

The Hamiltonian vector field $v_F$ is unique if
it exists.  However, except for the familiar case $n = 1$, there may
be $(n-1)$-forms $F$ having no Hamiltonian vector field.  
The reason is that given an $n$-plectic form $\omega$ on $X$, this map: 
\[  \begin{array}{ccl}
     T_x X & \to & \Lambda^n T_{x}^{\ast} X    \\
       v &\mapsto & i_v \omega  
\end{array}
\]
is one-to-one, but not necessarily onto unless $n = 1$.

The following proposition generalizes Liouville's Theorem:

\begin{prop}\label{liouville}If $F \in \ham$, then the Lie derivative
$\Lie{F}{\omega}$ is zero.
\end{prop}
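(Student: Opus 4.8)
The plan is to use Cartan's magic formula, which expresses the Lie derivative along a vector field $v$ as $L_v = d \circ \iota_v + \iota_v \circ d$ on differential forms. First I would write
\[
    \Lie{F}{\omega} = d\left(\ip{F}\omega\right) + \ip{F}\left(d\omega\right).
\]
Then I would handle the two terms separately. For the second term, since $\omega$ is $n$-plectic it is in particular closed, so $d\omega = 0$ and hence $\ip{F}(d\omega) = 0$. For the first term, I would invoke the defining property of a Hamiltonian form: because $F \in \ham$, there is a Hamiltonian vector field $v_F$ with $\ip{F}\omega = -dF$, so
\[
    d\left(\ip{F}\omega\right) = d(-dF) = -d^2 F = 0,
\]
using $d^2 = 0$. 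Adding the two vanishing contributions gives $\Lie{F}{\omega} = 0$, which is exactly the claim.

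There is no real obstacle here: the proof is a two-line application of Cartan's formula together with closedness of $\omega$ and nilpotency of $d$. The only thing worth being careful about is making sure the notation $L_{v_F}$ in the statement is understood as the Lie derivative along the Hamiltonian vector field $v_F$ associated to $F$ (which exists and is unique precisely because $F$ is Hamiltonian), and that $v_F$ itself need not be used beyond the single identity $\ip{F}\omega = -dF$. I would remark briefly that this recovers the classical Liouville theorem when $n = 1$, where $\omega$ is the symplectic form and $F$ is a smooth function: the flow of $v_F$ preserves $\omega$, hence preserves the Liouville volume form $\omega^{\wedge d}$.

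If desired, one could phrase the whole argument in a single display:
\[
    \Lie{F}{\omega}
      = d\,\ip{F}\omega + \ip{F}\,d\omega
      = d(-dF) + \ip{F}\,0
      = 0 .
\]
So the write-up will simply recall Cartan's formula, substitute the two inputs (the Hamiltonian condition and $d\omega = 0$), and conclude. I expect the author's proof to be essentially identical.
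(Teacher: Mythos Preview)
Your proof is correct and is essentially identical to the paper's: the paper writes ``Since $\omega$ is closed, $L_{v_F}\omega = d\iota_{v_F}\omega = ddF = 0$,'' which is exactly your application of Cartan's formula together with $d\omega = 0$ and $\iota_{v_F}\omega = -dF$. (The paper drops the minus sign in the intermediate step, but this is harmless since $d^2 = 0$.)
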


\begin{proof}
Since $\omega$ is closed, $\lie{F}{\omega}= d\ip{F}\omega = 
ddF = 0$.
\end{proof}

We can define a Poisson bracket of Hamiltonian $(n-1)$-forms
in two ways:

\begin{definition}  
\label{hemi-bracket.defn}
Given $F,G \in \ham$, the {\bf hemi-bracket} 
$\Hbrac{F}{G}$ is the $(n-1)$-form given by
\[  \Hbrac{F}{G} = \lie{F}{G}  .\]
\end{definition}

\begin{definition}
\label{semi-bracket.defn}
Given $F,G\in \ham$, the {\bf semi-bracket} $\Sbrac{F}{G}$
is the 
\break
$(n-1)$-form given by 
\[  \Sbrac{F}{G} = \ip{G}\ip{F}\omega .\]
\end{definition}

The two brackets agree in the familiar case
$n = 1$, but in general they differ by an exact form:

\begin{prop} Given $F,G\in \ham$, 
\[
\Hbrac{F}{G}=\Sbrac{F}{G} + d\ip{F}G. 
\label{bracket_relation}
\]
\end{prop}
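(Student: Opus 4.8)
The plan is to compute $\Hbrac{F}{G} - \Sbrac{F}{G} = \lie{F}{G} - \ip{G}\ip{F}\omega$ directly, using Cartan's magic formula to expand the Lie derivative. First I would write
\[
  \lie{F}{G} = \ip{F} dG + d\ip{F} G,
\]
the standard identity $L_v = \iota_v d + d\iota_v$ applied to the $(n-1)$-form $G$ with $v = v_F$. This already isolates the exact term $d\ip{F}G$ that appears on the right-hand side of the claimed identity, so it remains to show that $\ip{F}dG = -\ip{G}\ip{F}\omega$.

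The key step is to use the fact that $G$ is Hamiltonian: by definition, $dG = -\ip{G}\omega$. Substituting this in gives
\[
  \ip{F}dG = -\ip{F}\ip{G}\omega.
\]
Then I would invoke the graded-antisymmetry of the interior product, namely $\iota_v \iota_w = -\iota_w \iota_v$ on forms, to rewrite $\ip{F}\ip{G}\omega = -\ip{G}\ip{F}\omega$, so that $\ip{F}dG = \ip{G}\ip{F}\omega$. Wait — that sign needs care: combining the two displays gives $\ip{F}dG = -\ip{F}\ip{G}\omega = \ip{G}\ip{F}\omega$, which has the wrong overall sign relative to what I want. Let me instead be careful and recompute: we want $\ip{F}dG = -\ip{G}\ip{F}\omega$, i.e. $-\ip{F}\ip{G}\omega = -\ip{G}\ip{F}\omega$, i.e. $\ip{F}\ip{G}\omega = \ip{G}\ip{F}\omega$ — but $\iota$'s anticommute, so these differ by a sign. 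The resolution must be that the Hamiltonian condition should be applied to $F$ rather than $G$, or that I should expand $\lie{F}{G}$ and then use $\ip{F}\omega = -dF$. This sign bookkeeping is exactly the main obstacle, and I expect the correct route is: expand $\lie{F}{G} = \ip{F} dG + d\ip{F}G$, use $dG = -\ip{G}\omega$ to get $\ip{F} dG = -\ip{F}\ip{G}\omega = \ip{G}\ip{F}\omega = \Sbrac{F}{G}$, which then yields $\Hbrac{F}{G} = \Sbrac{F}{G} + d\ip{F}G$ — matching the statement, since the sign I worried about above cancels correctly once one tracks that $\ip{G}\ip{F}\omega$, not $\ip{F}\ip{G}\omega$, is the definition of the semi-bracket.

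So the three steps are: (1) Cartan's formula to split $\lie{F}{G}$ into $\ip{F}dG$ and $d\ip{F}G$; (2) the Hamiltonian property of $G$ to replace $dG$ by $-\ip{G}\omega$; (3) graded antisymmetry of interior products to recognize the result as $\Sbrac{F}{G}$. The main obstacle is purely the sign conventions in steps (2)–(3) — in particular keeping straight the order of the two interior products in Definition~\ref{semi-bracket.defn} — and no genuine analytic or geometric difficulty arises. I would double-check the final signs against the $n=1$ case, where both brackets must reduce to the usual Poisson bracket $\{F,G\} = \lie{F}{G}$ and the correction term $d\ip{F}G$ should vanish appropriately (there $G$ is a function, $\ip{F}G = 0$, so the two brackets coincide, consistent with the remark preceding the proposition).
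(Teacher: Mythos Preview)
Your proposal is correct and follows exactly the same route as the paper: apply Cartan's formula $L_{v_F} = \iota_{v_F} d + d\iota_{v_F}$ to $G$, use $dG = -\ip{G}\omega$, and recognize $-\ip{F}\ip{G}\omega = \ip{G}\ip{F}\omega = \Sbrac{F}{G}$. The sign detour in your middle paragraph is unnecessary noise---you had it right the first time once you recalled that the semi-bracket is defined as $\ip{G}\ip{F}\omega$---but the final three-step summary matches the paper's one-line proof exactly.
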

\begin{proof} Since $L_v = \iota_v d + d \iota_v$,
\[      \Hbrac{F}{G} = 
\lie{F}{G} = 
\ip{F} d G + d \ip{F} G =
-\ip{F}\ip{G} \omega + d \ip{F} G =
\Sbrac{F}{G} + d \ip{F} G .\]
\end{proof}

Both brackets have nice properties:

\begin{prop} \label{hemi-bracket} Let $F,G,H \in \ham$ and
let $v_F,v_G,v_H$ be the respective Hamiltonian
vector fields.  The hemi-bracket $\Hblank$ has the following 
properties:

  \begin{enumerate}
\item The bracket of Hamiltonian forms is Hamiltonian:
  \begin{eqnarray}
    d\Hbrac{F}{G} = -\iota_{[v_F,v_G]} \omega
\label{hemi-closure}
  \end{eqnarray}
so in particular we have 
\[     v_{\Hbrac{F}{G}} = [v_F,v_G]  .\]

\item The bracket is antisymmetric up to an exact form:
  \begin{eqnarray}
    \Hbrac{F}{G} + dS_{F,G} = -\Hbrac{G}{F} 
  \end{eqnarray}
  with $S_{F,G}=-(\ip{F}G + \ip{G}F)$.

\item The bracket satisfies the Jacobi identity:
  \begin{eqnarray}
    \Hbrac{F}{\Hbrac{G}{H}} = 
\Hbrac{\Hbrac{F}{G}}{H} + \Hbrac{G}{\Hbrac{F}{H}}. 
  \end{eqnarray}
  
  \end{enumerate}
\end{prop}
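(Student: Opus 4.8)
The plan is to reduce all three parts to standard identities of Cartan calculus---the magic formula $L_v = \iota_v d + d\iota_v$, the graded commutators $\ip{F}\ip{G} = -\ip{G}\ip{F}$ and $L_v \iota_w - \iota_w L_v = \iota_{[v,w]}$, the fact that $d$ commutes with $L_v$, and the Lie-algebra homomorphism property $[L_v, L_w] = L_{[v,w]}$ acting on forms---combined with the defining relation $dF = -\ip{F}\omega$, closedness of $\omega$, and Liouville's theorem (Prop.\ \ref{liouville}), which supplies $\Lie{F}{\omega} = 0$ for $F \in \ham$.

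For part 1, I would compute $d\Hbrac{F}{G} = d\,\lie{F}{G} = \lie{F}{dG} = -\lie{F}{\ip{G}\omega}$, using that $d$ commutes with the Lie derivative and that $dG = -\ip{G}\omega$. Then I apply the commutator identity in the form $\lie{F}{\ip{G}\omega} = \ip{G}\lie{F}{\omega} + \iota_{[v_F,v_G]}\omega$ and discard the first term by Liouville's theorem. This yields $d\Hbrac{F}{G} = -\iota_{[v_F,v_G]}\omega$, which at once shows $\Hbrac{F}{G} \in \ham$ and, by uniqueness of Hamiltonian vector fields (nondegeneracy of $\omega$), gives $v_{\Hbrac{F}{G}} = [v_F,v_G]$.

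For part 2, I would expand $\Hbrac{F}{G} = \lie{F}{G} = \ip{F}dG + d\ip{F}G$ and likewise $\Hbrac{G}{F} = \ip{G}dF + d\ip{G}F$ via the magic formula, add them, and substitute $dG = -\ip{G}\omega$ and $dF = -\ip{F}\omega$. The non-exact terms combine to $-\ip{F}\ip{G}\omega - \ip{G}\ip{F}\omega$, which vanishes since interior products anticommute, leaving $\Hbrac{F}{G} + \Hbrac{G}{F} = d\bigl(\ip{F}G + \ip{G}F\bigr) = -dS_{F,G}$ with $S_{F,G} = -(\ip{F}G + \ip{G}F)$, as claimed. (One can equally well quote the preceding proposition relating $\Hblank$ and $\Sblank$.) For part 3, I would rewrite each term as an iterated Lie derivative on $H$: the left side is $\lie{F}{\lie{G}{H}}$, the first term on the right is $L_{v_{\Hbrac{F}{G}}}H = L_{[v_F,v_G]}H$ by part 1, and the last term is $\lie{G}{\lie{F}{H}}$; the identity then collapses to $[L_{v_F}, L_{v_G}]H = L_{[v_F,v_G]}H$, which is standard.

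The calculations are all routine; the only step demanding care is part 1, where one must invoke Liouville's theorem at the right moment and track the sign in the $[L,\iota]$ commutator. It is also worth noting the logical dependency: part 3 relies on part 1 through the key rewriting $v_{\Hbrac{F}{G}} = [v_F,v_G]$, so the parts should be proved in the stated order.
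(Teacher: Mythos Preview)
Your proposal is correct and follows essentially the same approach as the paper: for part 1 the paper also reduces to $d\Hbrac{F}{G}=-\iota_{[v_F,v_G]}\omega - \ip{G}\lie{F}{\omega}$ via the standard $[L,\iota]$ commutator and then invokes Liouville; for part 2 it likewise expands with the Cartan formula and uses $\ip{F}\ip{G}\omega + \ip{G}\ip{F}\omega = 0$; and for part 3 it uses part 1 to rewrite $\Hbrac{\Hbrac{F}{G}}{H}=L_{[v_F,v_G]}H$ and then the identity $[L_{v_F},L_{v_G}]=L_{[v_F,v_G]}$. Your writeup is just a bit more explicit about naming each Cartan-calculus identity, but the argument is the same.
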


\begin{proof}
1.  If $F,G \in \ham$, then
$d\Hbrac{F}{G}=-\iota_{[v_F,v_G]} \omega - \ip{G}\lie{F}{\omega}$, 
by the identities relating the Lie derivative, exterior derivative, 
and interior product. Prop.\ \ref{liouville} then implies 
the desired result.

2.  Rewriting the Lie derivative in terms of $d$ and $\iota$ gives 
  \[ 
\begin{array}{ccl} 
 \Hbrac{F}{G}+\Hbrac{G}{F} &=& 
\ip{G}\ip{F}\omega + \ip{F}\ip{G}\omega + d(\ip{F}G + \ip{G}F) \\
&=& -dS_{F,G}. 
\end{array}
\]

3.  The definition of the bracket and property 1 give
\[  \begin{array}{ccl}
\Hbrac{\Hbrac{F}{G}}{H} + \Hbrac{G}{\Hbrac{F}{H}} &=&  
  L_{[v_F,v_G]}{H} + \lie{G}{\lie{F}}H \\
  &=& \lie{F}{\lie{G}}H \\
  &=& \Hbrac{F}{\Hbrac{G}{H}}.
\end{array}
\]
\end{proof}

\begin{prop}\label{semi-bracket} Let $F,G,H \in \ham$ and let
$v_F,v_G,v_H$ be the respective Hamiltonian
vector fields.  The semi-bracket $\Sblank$ has the following properties:
  \begin{enumerate}
\item The bracket of Hamiltonian forms is Hamiltonian:
  \begin{eqnarray}
    d\Sbrac{F}{G} = -\iota_{[v_F,v_G]} \omega.
  \end{eqnarray}
so in particular we have 
\[     v_{\Sbrac{F}{G}} = [v_F,v_G]  .\]
\item The bracket is antisymmetric: 
  \begin{eqnarray}
    \Sbrac{F}{G} = -\Sbrac{G}{F}
  \end{eqnarray}

\item The bracket satisfies the Jacobi identity up to an exact form:
\begin{eqnarray}
    \Sbrac{F}{\Sbrac{G}{H}} + dJ_{F,G,H} =
    \Sbrac{\Sbrac{F}{G}}{H} +
    \Sbrac{G}{\Sbrac{F}{H}} 
  \end{eqnarray}
with $J_{F,G,H}=-\ip{F}\ip{G}\ip{H}\omega$.
\end{enumerate}
\end{prop}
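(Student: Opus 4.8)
The plan is to dispatch the three parts in the order (2), (1), (3). Part (2) is immediate: since the interior product $\omega(u, v, \cdot, \dots, \cdot)$ is antisymmetric under exchange of $u$ and $v$, we get $\Sbrac{F}{G} = \ip{G}\ip{F}\omega = -\ip{F}\ip{G}\omega = -\Sbrac{G}{F}$.

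For part (1), I would differentiate the bracket and reduce to the hemi-bracket. By Prop.~\ref{bracket_relation}, $\Hbrac{F}{G} = \Sbrac{F}{G} + d\ip{F}G$, so $d\Sbrac{F}{G} = d\Hbrac{F}{G}$, and Prop.~\ref{hemi-bracket}(1) gives $d\Hbrac{F}{G} = -\iota_{[v_F,v_G]}\omega$. Hence $\Sbrac{F}{G}$ is Hamiltonian, and since nondegeneracy of $\omega$ makes the Hamiltonian vector field unique, $v_{\Sbrac{F}{G}} = [v_F,v_G]$. (A self-contained alternative runs directly: $d\,\ip{G}\ip{F}\omega = L_{v_G}\ip{F}\omega - \ip{G}\,d\ip{F}\omega$; here $d\ip{F}\omega = L_{v_F}\omega - \ip{F}d\omega = 0$ by Prop.~\ref{liouville} and closedness, while $L_{v_G}\ip{F}\omega = \ip{F}L_{v_G}\omega + \iota_{[v_G,v_F]}\omega = -\iota_{[v_F,v_G]}\omega$.)

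Part (3) is the substantive one. Using part (1), the three nested brackets become nested interior products: $\Sbrac{F}{\Sbrac{G}{H}} = \iota_{[v_G,v_H]}\ip{F}\omega$, $\Sbrac{\Sbrac{F}{G}}{H} = \ip{H}\iota_{[v_F,v_G]}\omega$, and $\Sbrac{G}{\Sbrac{F}{H}} = \iota_{[v_F,v_H]}\ip{G}\omega$. The only tools needed are Cartan's formula $L_v = d\iota_v + \iota_v d$, the graded commutator $\iota_{[u,v]} = L_u\iota_v - \iota_v L_u$, closedness $d\omega = 0$, and the Liouville identities $L_{v_F}\omega = L_{v_G}\omega = L_{v_H}\omega = 0$ from Prop.~\ref{liouville}. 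The key move is to combine the last two brackets first: writing $\iota_{[v_F,v_G]} = L_{v_F}\ip{G} - \ip{G}L_{v_F}$, then pushing $\ip{H}$ through $L_{v_F}$ via $\ip{H}L_{v_F} = L_{v_F}\ip{H} - \iota_{[v_F,v_H]}$, and discarding every term that contains $L_{v_F}\omega$, one obtains the tidy identity $\Sbrac{\Sbrac{F}{G}}{H} + \Sbrac{G}{\Sbrac{F}{H}} = L_{v_F}\ip{H}\ip{G}\omega$. Expanding the right side by Cartan's formula, $L_{v_F}\ip{H}\ip{G}\omega = d\ip{F}\ip{H}\ip{G}\omega + \ip{F}\,d\ip{H}\ip{G}\omega$. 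The first term equals $dJ_{F,G,H}$ after swapping $\ip{G}$ and $\ip{H}$, since $J_{F,G,H} = -\ip{F}\ip{G}\ip{H}\omega$. For the second term, $d\ip{H}\ip{G}\omega = L_{v_H}\ip{G}\omega = \iota_{[v_H,v_G]}\omega$, again using $d\omega = 0$ and the vanishing Lie derivatives, so that $\ip{F}\,d\ip{H}\ip{G}\omega = \iota_{[v_G,v_H]}\ip{F}\omega = \Sbrac{F}{\Sbrac{G}{H}}$ after reordering the interior products and using $[v_H,v_G] = -[v_G,v_H]$. Assembling these gives $\Sbrac{\Sbrac{F}{G}}{H} + \Sbrac{G}{\Sbrac{F}{H}} = dJ_{F,G,H} + \Sbrac{F}{\Sbrac{G}{H}}$, which is exactly the asserted identity.

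I expect the only real obstacle to be the sign bookkeeping in part (3): one must track carefully the antisymmetry of $\iota$ under exchanging vector fields and the graded commutators among $d$, $\iota$, and $L$. Organizing the computation around the intermediate identity $\Sbrac{\Sbrac{F}{G}}{H} + \Sbrac{G}{\Sbrac{F}{H}} = L_{v_F}\ip{H}\ip{G}\omega$, rather than expanding all three nested brackets symmetrically at once, keeps the argument short and the signs under control.
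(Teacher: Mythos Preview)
Your argument is correct in all three parts. Parts (1) and (2) match the paper's proof exactly. For part (3) you take a different but equally valid route: the paper rewrites each nested bracket via the identity $\Sbrac{F}{G} = \ip{F}\,dG$ and then manipulates the operator expression $(-\iota_{[v_F,v_G]} + \ip{F}d\ip{G} - \ip{G}d\ip{F})\,dH$ acting on the fixed form $dH$, eventually collapsing it to $-d\ip{F}\ip{G}\,dH$. You instead invoke part (1) to convert each nested bracket into an interior product by a Lie bracket of vector fields applied to $\omega$, and organize the computation around the intermediate identity $\Sbrac{\Sbrac{F}{G}}{H} + \Sbrac{G}{\Sbrac{F}{H}} = L_{v_F}\ip{H}\ip{G}\omega$. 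Both approaches rest on the same Cartan calculus (the commutator $\iota_{[u,v]} = L_u\iota_v - \iota_v L_u$, Weil's identity, and the vanishing of $L_{v_F}\omega$), so neither is more elementary; your version has the mild advantage that the intermediate identity is memorable and makes the appearance of $J_{F,G,H}$ transparent, while the paper's version keeps all the action on a single form $dH$ and avoids repeatedly invoking part (1).
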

\begin{proof} 
1. Prop.\ \ref{bracket_relation} and Prop.\ \ref{hemi-bracket} 
imply $d\Sbrac{F}{G} = -\iota_{\brac{v_F}{v_G}} \omega$.

2. The conclusion follows from the antisymmetry of $\omega$. 

3. First, note that antisymmetry implies 
$\Sbrac{F}{G}=-\ip{F}\ip{G}\omega=\ip{F}dG$. Hence 
\[ \Sbrac{F}{\Sbrac{G}{H}} = \ip{F}d \Sbrac{G}{H}=\ip{F}d \ip{G}dH, \]
\[\Sbrac{\Sbrac{F}{G}}{H}= \ip{\Sbrac{F}{G}} dH = \iota_{[v_{F},v_{G}]} dH, \]
\[\Sbrac{G}{\Sbrac{F}{H}}= \Sbrac{G}{\ip{F}dH}= \ip{G}d \ip{F} dH. \]
The commutator of the Lie derivative and interior product:
\[ \iota_{[v_{F},v_{G}]} = \lie{F}\ip{G} - \ip{G}\lie{F},\]
and Weil's identity: 
\[\lie{F}=d\ip{F}+\ip{F}d, ~\ \lie{G}=d\ip{G}+\ip{G}d,\] 
imply
\begin{eqnarray*}
&&\Sbrac{F}{\Sbrac{G}{H}} - \Sbrac{\Sbrac{F}{G}}{H} - \Sbrac{G}{\Sbrac{F}{H}}\\
&=& \left(- \iota_{[v_{F},v_{G}]}  + \ip{F}d \ip{G} - \ip{G}d \ip{F} \right)dH \\
&=& \left(\ip{G}\lie{F} - \lie{F}\ip{G}+ \ip{F}d \ip{G} - \ip{G}d \ip{F} \right)dH \\
&=& \left(\ip{G}\ip{F}d - d\ip{F}\ip{G} \right)dH\\
&=& - d\ip{F}\ip{G}dH = d\ip{F}\ip{G}\ip{H}\omega = -dJ_{F,G,H}.
\end{eqnarray*} 
\end{proof}

In general, neither the hemi-bracket nor the semi-bracket makes $\ham$
into a Lie algebra, since each satisfies one of the Lie algebra
laws only {\it up to an exact $(n-1)$-form}.  The exception is $n =
1$, the case of ordinary Hamiltonian mechanics.  In this case both
brackets equal the usual Poisson bracket.  In what follows we consider
the case $n = 2$.

\section{Lie 2-Algebras}
\label{lie 2-algebras}

We begin with a quick review of the fully general Lie 2-algebras 
defined by Roytenberg \cite{Roytenberg}.
It will be efficient to work with these using the language 
of chain complexes.  A Lie 2-algebra is a category equipped with 
structures analogous to those of a Lie algebra.  So, to begin with, it is
a `2-vector space': a category where the set of objects and the set of 
morphisms are vector spaces, and all the category operations are linear.
However, one can show \cite{BaezCrans:2004} that a 2-vector space 
is the same as a {\bf 2-term complex}: that is, 
a chain complex of vector spaces that vanishes except in degrees 0 and 1:
\[ L_0 
\stackrel{d}{\leftarrow} L_1 
\stackrel{0}{\leftarrow} 0 
\stackrel{0}{\leftarrow} 0 
\stackrel{0}{\leftarrow} \cdots 
\]
This lets us define a Lie 2-algebra as a 2-term complex
equipped with a bracket operation satisfying the usual Lie algebra
laws `up to coherent chain homotopy'.   

In particular, the bracket
of 0-chains will be skew-symmetric up to a chain homotopy called the
`alternator':
\[        [x,y] + dS_{x,y} = -[y,x]  \]
while the Jacobi identity will hold up to a chain homotopy called the
`Jacobiator':
\[     [x,[y,z]] + dJ_{x,y,z} = [[x,y],z] + [y,[x,z]]  .\]
Furthermore, these chain homotopies need to 
satisfy some laws of their own.  If the alternator vanishes,
a Lie 2-algebra is the same as a 2-term complex made into an
`$L_\infty$-algebra' or `sh Lie algebra' in the sense of 
Stasheff \cite{LadaStasheff:1992}.  Roytenberg introduced 
more general Lie 2-algebras where the alternator does not vanish.

The definitions to come require a few preliminary
explanations.  First, we use the familiar tensor product of chain 
complexes:
\[          (L \otimes M)_i = \bigoplus_{j + k = i} L_j \otimes M_k . \]
With this, the tensor product of 2-term complexes is a 3-term complex. 
Previous work on Lie 2-algebras used a `truncated' tensor product of 
2-term complexes, which gives another 2-term complex \cite{BaezCrans:2004,
Roytenberg}.  But since this makes no difference to anything we do here, 
we shall use the familiar tensor product.

Second, given chain complexes $L$ and $M$, we use
\[     \sigma \maps L \otimes M \to M \otimes L \]
to denote the usual `switch' map with signs included:
\[     \sigma(x \otimes y) = (-1)^{\deg x \deg y} y \otimes x .\]

Third, given 0-chains
$x,y$ and a 1-chain $T$ with $y = x + dT$, we write
\[         T \maps x \to y   .\]
We also write $1 \maps x \to x$ in the case where the 1-chain $T$
vanishes, and write $ST \maps x \to z$ for the 1-chain $S + T$,
where $T \maps x \to y$ and $S \maps x \to z$.  
This notation alludes to how a 2-term chain complex can be 
thought of as a category.  

In this notation, the alternator in a Lie 2-algebra $L$ gives a 1-chain
\[       S_{x,y} \maps [x,y] \to -[y,x]  \]
for every pair of 0-chains $x,y$, and the Jacobiator gives a 1-chain
\[  J_{x,y,z} \maps [x,[y,z]] \to [[x,y],z] + [y,[x,z]]  \]
for every triple of 0-chains $x,y,z$.

\begin{definition}
A {\bf Lie 2-algebra} is a 2-term chain complex of vector spaces
$L = (L_0\stackrel{d}\leftarrow L_1)$ equipped with the following structure:
\begin{itemize}
\item a chain map $\blankbrac\maps L \otimes L\to L$ called the {\bf
bracket};
\item a chain homotopy
\[S\maps \blankbrac \Rightarrow -\blankbrac \circ \sigma\]
called the {\bf alternator};
\item an antisymmetric chain homotopy
\[J\maps \brac{\cdot}{\blankbrac}\Rightarrow \brac{\blankbrac}{\cdot} 
+ \brac{\cdot}{\blankbrac}\circ (\sigma \otimes 1)\]
called the {\bf Jacobiator}.
\end{itemize}
In addition, the following diagrams are required to commute:
$$ \def\objectstyle{\scriptstyle}
  \def\labelstyle{\scriptstyle}
   \xy
   (0,35)*+{[[[w,x],y],z]}="1";
   (-40,20)*+{[[[w,y],x],z] + [[w,[x,y]],z]}="2";
   (40,20)*+{[[[w,x],y],z]}="3";
   (-40,0)*+{[[[w,y],z],x] + [[w,y],[x,z]]}="4'";
   (-40,-4)*+{+ [w,[[x,y],z]] + [[w,z],[x,y]]}="4";
   (40,0)*+{[[[w,x],z],y] + [[w,x],[y,z]]}="5'";
   (-40,-20)*+{[[[w,z],y],x] + [[w,[y,z]],x]}="6'";
   (-40,-24)*+{+ [[w,y], [x,z]] + [w,[[x,y],z]] + [[w,z],[x,y]]}="6";
   (40,-20)*+{[[w,[x,z]],y]}="7'";
   (40,-24)*+{+ [[w,x],[y,z]] + [[[w,z],x],y]}="7";
   (0,-40)*+{[[[w,z],y],x] + [[w,z],[x,y]]  + [[w,y],[x,z]]}="8'";
   (0,-44)*+{+ [w,[[x,z],y]]  + [[w,[y,z]],x] + [w,[x,[y,z]]]}="8";
            (32,-31)*{J_{w,[x,z],y} }; 
            (32,-34.5)*{+  J_{[w,z],x,y} + J_{w,x,[y,z]}};
        {\ar_{[J_{w,x,y},z]}                   "1";"2"};
        {\ar^{1}                               "1";"3"};
        {\ar_{J_{[w,y],x,z} + J_{w,[x,y],z}}   "2";"4'"};
        {\ar_{[J_{w,y,z},x]+1}                 "4";"6'"};
        {\ar^{J_{[w,x],y,z}}                   "3";"5'"};
        {\ar^{[J_{w,x,z},y]+1}                 "5'";"7'"};
        {\ar_{[w,J_{x,y,z}]+1 \; \; }          "6";"8'"};
        {\ar^{}                                "7";"8'"};
\endxy
\\ \\
$$

$$ \def\objectstyle{\scriptstyle}
  \def\labelstyle{\scriptstyle}
   \xy
   (-20,20)*+{[[x,y],z]}="1";
   (20,20)*+{\; -[[y,x],z]}="2";
   (0,0)*+{[x,[y,z]] - [y,[x,z]]}="3";
        {\ar^{[S_{x,y},z]}                   "1";"2"};
        {\ar_{-J_{x,y,z}}          "1";"3"};
		{\ar^{-J_{y,x,z}}        "2";"3"};
\endxy
\\ \\
$$

$$ \def\objectstyle{\scriptstyle}
  \def\labelstyle{\scriptstyle}
   \xy
 (-25,20)*{[x,[y,z]] \;}="1"; 
 (25,20)*{\; -[x,[z,y]]}="2";
 (-25,-20)*{[[x,y],z] + [y,[x,z]] \;}="3"; 
 (25,-20)*{\; -[[x,z],y] - [z,[x,y]]}="4";
{\ar^{[x,S_{y,z}]}   "1";"2"};
{\ar_{S_{[x,y],z}+S_{y,[x,z]}} "3";"4"};
{\ar_{J_{x,y,z}} "1";"3"};
{\ar^{-J_{x,z,y}}  "2";"4"};
\endxy
\\ \\
$$

$$ \def\objectstyle{\scriptstyle}
  \def\labelstyle{\scriptstyle}
   \xy
   (-20,20)*+{[x,[y,z]]}="1";
   (20,20)*+{[x,[y,z]]}="2";
   (0,0)*+{-[[y,z],x]}="3";
        {\ar^{1_{[x,[y,z]]}}                   "1";"2"};
        {\ar_{S_{x,[y,z]}}          "1";"3"};
		{\ar_{-S_{[y,z],x}}        "3";"2"};
\endxy
\\ \\
$$
\end{definition}

\begin{definition}
A Lie 2-algebra for which the Jacobiator is the identity
chain homotopy is called {\bf hemistrict}.  One for which
the alternator is the identity chain homotopy is called {\bf semistrict}.
\end{definition}
\noindent
When the alternator is the identity, the Jacobiator $J_{x,y,z}$ is
antisymmetric as a function of $x,y$ and $z$, so the semistrict
Lie 2-algebras defined here match those of Baez and Crans \cite{BaezCrans:2004}.

Now suppose that $(X,\omega)$ is a 2-plectic manifold.  We shall
construct two Lie 2-algebras associated to $(X,\omega)$: one
hemistrict and one semistrict.  Then we shall prove these are
isomorphic.   
Both these Lie 2-algebras have the same underlying 2-term complex, 
namely:
\[
L \quad = \quad 
\ham  
\stackrel{d}{\leftarrow} \cinf  
\stackrel{0}{\leftarrow} 0 
\stackrel{0}{\leftarrow} 0 
\stackrel{0}{\leftarrow} \cdots 
 \]
where $d$ is the usual exterior derivative of functions.  
To see that this chain complex is well-defined, note that
any exact form is Hamiltonian, with $0$ as its Hamiltonian vector
field.

The hemistrict Lie 2-algebra comes with a bracket called the
{\bf hemi-bracket}:
\[      \Hbrac{\cdot}{\cdot} \maps L \otimes L \to L  .\]
In degree $0$, the hemi-bracket is given as in 
Defn.\ \ref{hemi-bracket.defn}: 
\[  \Hbrac{F}{G} = \lie{F}G.   \]
In degree $1$, it is given by: 
\[  \Hbrac{F}{f} = \lie{F}{f}, \qquad \Hbrac{f}{F} = 0.  \]
In degree $2$, we necessarily have
\[   \Hbrac{f}{g} = 0.    \]
Here $F,G \in \ham$, while $f,g \in \cinf$.

To see that the hemi-bracket is in fact a chain map, it suffices
to check it on hemi-brackets of degree 1:
\[     d\Hbrac{F}{f} = d(\lie{F}f) = \lie{F} df = \Hbrac{F}{df}  \]
and 
\[     d\Hbrac{f}{F} = 0 = \lie{df} F = \Hbrac{df}{F}  \]
since the Hamiltonian vector field corresponding to an exact
1-form is zero.

\begin{theorem}
\label{hemistrict}
If $(X,\omega)$ is a 2-plectic manifold, there is 
a hemistrict Lie 2-algebra $L(X,\omega)_\h$
where:
\begin{itemize}
\item the space of 0-chains is $\ham$,
\item the space of 1-chains is $\cinf$,
\item the differential is the exterior derivative $d \maps \cinf \to \ham$,
\item the bracket is $\Hblank$,
\item the alternator is the bilinear map $S \maps \ham \times \ham \to \cinf$ 
defined by $S_{F,G}= -(\ip{F}G + \ip{G}F)$, and
\item the Jacobiator is the identity, hence given by the trilinear map
$J \maps \ham \times \ham \times \ham \to \cinf$ with 
$J_{F,G,H} = 0$.
\end{itemize}
\end{theorem}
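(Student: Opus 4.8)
The plan is to verify directly that the data listed in the statement satisfies the axioms of a Lie 2-algebra, using the propositions on the hemi-bracket already established. The underlying 2-term complex $L = (\ham \stackrel{d}{\leftarrow} \cinf)$ has already been checked to be well-defined, and the hemi-bracket has already been shown to be a chain map. So the work that remains is: (i) confirm $S$ is a chain homotopy from $\blankbrac$ to $-\blankbrac \circ \sigma$, (ii) confirm that with $J = 0$ the Jacobiator condition $J \maps \brac{\cdot}{\blankbrac} \Rightarrow \brac{\blankbrac}{\cdot} + \brac{\cdot}{\blankbrac}\circ(\sigma \otimes 1)$ is satisfied — i.e. that the Jacobi identity holds strictly in all degrees — and (iii) check that the four coherence diagrams commute.

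First I would handle the chain-homotopy condition for $S$. On 0-chains, the required identity is exactly $\Hbrac{F}{G} + dS_{F,G} = -\Hbrac{G}{F}$ with $S_{F,G} = -(\ip{F}G + \ip{G}F)$, which is Prop.\ \ref{hemi-bracket}(2). One then checks the degree-1 compatibility: that $S_{dF,G}$ and $S_{F,dG}$ relate correctly to the degree-1 brackets $\Hbrac{F}{G}$, $\Hbrac{G}{F}$ etc. Here $S_{f,G} = -(\ip{df}G + \ip{G}f) = -\ip{G}f$ since $v_{df} = 0$, and one verifies this matches $-\Hbrac{G}{f} - \Hbrac{f}{G} = -\ip{G}df = -\Lie{G}{f}$... actually $\ip{G}f = 0$ for a function, so $S_{f,G}=0$; the condition then reads $\Hbrac{f}{G} + 0 = -\Hbrac{G}{f}$, i.e. $0 = -\Lie{G}{f}$, which is false in general. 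So I must be careful: the correct degree-1 component of the homotopy condition is $dS_{f,G} + \Hbrac{f}{G} = -(\Hbrac{G}{f})$ evaluated as chains of the appropriate degree — and $S$ restricted to degree-1 inputs is determined by requiring $S$ to be a chain homotopy, namely $S(f,G) := -\ip{G}\,df \cdot(\text{something})$; I would instead recall that for a chain homotopy between chain maps $L\otimes L \to L$, the degree-1 piece $L_0\otimes L_1 \to L_1$ is forced, and simply verify the defining square $dS_1 + S_0 d = \Hbrac{\cdot}{\cdot} + \Hbrac{\cdot}{\cdot}\circ\sigma$ degree by degree using $v_{df}=0$ and the formulas above. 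This bookkeeping is routine once the sign conventions from the $\sigma$ map are pinned down.

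Next, $J = 0$: the Jacobiator condition in degree 0 is precisely Prop.\ \ref{hemi-bracket}(3), the strict Jacobi identity; in degrees 1 and 2 the bracket involves functions, and since $\Hbrac{f}{F}=0$ and $\Hbrac{f}{g}=0$ the iterated brackets collapse and the identity reduces to $\Lie{F}{\Lie{G}{f}} = \Lie{[v_F,v_G]}{f} + \Lie{G}{\Lie{F}{f}}$, which is just $[L_{v_F},L_{v_G}] = L_{[v_F,v_G]}$ on functions. Finally, the four coherence diagrams: with $J = 0$ the big pentagon-like diagram and the third and fourth diagrams become statements purely about $S$ and the identity morphisms, while the second diagram $[[x,y],z] \to -[[y,x],z]$ with legs $[S_{x,y},z]$, $-J_{x,y,z}$, $-J_{y,x,z}$ collapses (since $J=0$) to the assertion $[S_{x,y},z] = 0$, which holds because $S_{x,y}$ is a function and $\Hbrac{\cdot}{\cdot}$ kills functions in the first slot. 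The main obstacle is simply the sheer number of sign checks forced by the switch map $\sigma$ and the convention $T\maps x\to y$ for $y = x+dT$; I expect the genuinely substantive content to be entirely contained in Prop.\ \ref{hemi-bracket}, with everything else being a disciplined unwinding of the definition of a hemistrict Lie 2-algebra. Hence the proof will mostly consist of pointing to Prop.\ \ref{hemi-bracket}(1)--(3) for the bracket-closure, alternator, and Jacobi properties, checking the degree-1 and degree-2 brackets by the explicit formulas, and observing that with $J=0$ all four coherence diagrams degenerate to identities already verified.
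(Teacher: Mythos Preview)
Your overall plan matches the paper's: reduce the chain-homotopy conditions for $S$ and $J$ to Prop.\ \ref{hemi-bracket}, then check the four coherence diagrams. Your treatment of the first, second, and fourth diagrams is correct (the paper likewise observes that with $J=0$ the first two have every edge equal to the identity, and the fourth follows from the symmetry $S_{F,G}=S_{G,F}$).

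There is one genuine gap. You assert that with $J=0$ ``all four coherence diagrams degenerate to identities already verified,'' but the third diagram does \emph{not}. With $J=0$ its vertical edges become identities, so commutativity reduces to
\[
\Hbrac{F}{S_{G,H}} \;=\; S_{\Hbrac{F}{G},H} + S_{G,\Hbrac{F}{H}},
\]
i.e.\ that $\Lie{F}{}$ acts as a derivation on the bilinear pairing $S$. This is not contained in Prop.\ \ref{hemi-bracket} and needs its own short computation: expand both sides using $S_{F,G}=-(\ip{F}G+\ip{G}F)$, use Prop.\ \ref{hemi-bracket}(1) to identify $v_{\Hbrac{F}{G}}=[v_F,v_G]$, and apply the Cartan identity $\iota_{[v_F,v_G]}=\Lie{F}{\ip{G}}-\ip{G}\Lie{F}{}$. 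The paper carries this out explicitly; without it your argument is incomplete.

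A smaller point: your degree-1 check of the alternator homotopy went off the rails because you tried to evaluate $S_{f,G}$ directly, but $S$ is only defined on $L_0\otimes L_0$; the degree-1 component of the homotopy is forced to be zero (since $L_2=0$), and the condition to verify is $S_{F,df}$ and $S_{dg,G}$ against the degree-1 brackets. The paper writes this out in one line using $v_{df}=0$; your instinct that it is routine is correct, but the specific computation you sketched is not the right one.
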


\begin{proof}
That $S$ is a chain homotopy with the right source and target follows
from Prop.\ \ref{hemi-bracket}\comment{.3} and the fact that:
\[   \Hbrac{F}{f} + \Hbrac{g}{G} + S_{F,df} + S_{dg,G} 
= \lie{F}f - \ip{F}df - \ip{G}dg =  -\Hbrac{f}{F} - \Hbrac{G}{g} .\]
Prop.\ \ref{hemi-bracket} \comment{.3} also says that the Jacobi
identity holds.  The following equations then imply
that $J$ is also a chain homotopy with the right source and target:
$$\Hbrac{F}{\Hbrac{G}{f}} = \Hbrac{\Hbrac{F}{G}}{f} +
\Hbrac{G}{\Hbrac{F}{f}} $$
$$\Hbrac{F}{\Hbrac{f}{G}} = \Hbrac{\Hbrac{F}{f}}{G} =
\Hbrac{f}{\Hbrac{F}{G}} = 0$$
$$\Hbrac{f}{\Hbrac{F}{G}} = \Hbrac{\Hbrac{f}{F}}{G} =
\Hbrac{F}{\Hbrac{f}{G}} = 0 .$$ 
So, we just need to check that the Lie 2-algebra axioms hold.  The
first two diagrams commute since each edge is the identity.  The
commutativity of the third diagram is shown as follows:
\[
\begin{array}{rcl}
S_{\Hbrac{F}{G},H}+S_{G,\Hbrac{F}{H}} 
&=& -\iota_{\brac{v_F}{v_G}}H - \ip{H}\Hbrac{F}{G} - \ip{G}\Hbrac{F}{H} 
- \iota_{\brac{v_F}{v_H}}G\\
&=& \Lie{F}{(-\ip{G}H-\ip{H}G)}\\
&=& \Hbrac{F}{S_{G,H}}
\end{array}
\]
The last diagram says that
\[  
S_{F,\Hbrac{G}{H}} - S_{\Hbrac{G}{H},F} = 0,
\]
and this follows from the fact that the alternator
is symmetric: $S_{F,G} = S_{G,F}$.
\end{proof}

Next we make $L$ into a semistrict Lie 2-algebra.  For this,
we use a chain map called the {\bf semi-bracket}:
\[      \Sbrac{\cdot}{\cdot} \maps L \otimes L \to L  .\]
In degree $0$, the semi-bracket is given as in 
Defn.\ \ref{semi-bracket.defn}: 
\[  \Sbrac{F}{G} = \ip{G}\ip{F}\omega .\]
In degrees $1$ and $2$, we set it equal to zero:
\[  \Hbrac{F}{f} = 0, \qquad \Hbrac{f}{F} = 0, \qquad
    \Hbrac{f}{g} = 0.    \]

\begin{theorem}
\label{semistrict}
If $(X,\omega)$ is a 2-plectic manifold, there is a 
semistrict Lie 2-algebra $L(X,\omega)_\s$ where:
\begin{itemize}
\item the space of 0-chains is $\ham$,
\item the space of 1-chains is $\cinf$,
\item the differential is the exterior derivative $d \maps \cinf \to \ham$,
\item the bracket is $\Sblank$,
\item the alternator is the identity, hence given by the bilinear
map $S \maps \ham \times \ham \to \cinf$ with $S_{F,G} = 0$, and
\item the Jacobiator is the trilinear map $J\maps \ham\times \ham\times 
\ham\to \cinf$ defined by $J_{F,G,H} = -\ip{F}\ip{G}\ip{H}\omega$.
\end{itemize}
\end{theorem}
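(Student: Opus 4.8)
The plan is to verify directly that the data listed in the statement satisfies the axioms in the definition of a Lie 2-algebra, leaning heavily on Prop.\ \ref{semi-bracket}, which already records the three key facts about $\Sblank$: that it maps Hamiltonian forms to Hamiltonian forms with $v_{\Sbrac{F}{G}} = [v_F,v_G]$, that it is strictly antisymmetric, and that it satisfies the Jacobi identity up to the exact term $dJ_{F,G,H}$ with $J_{F,G,H} = -\ip{F}\ip{G}\ip{H}\omega$. First I would check that $\Sblank$ is a genuine chain map $L \otimes L \to L$: on degree-0 elements $d\Sbrac{F}{G} = -\iota_{[v_F,v_G]}\omega = \Sbrac{F}{dG}$ by Prop.\ \ref{semi-bracket}.1 together with $dG = -\ip{G}\omega$ (so that $\Sbrac{F}{dG}$ makes sense as $\ip{dG}\ip{F}\omega$, and one must observe this equals $\Sbrac{dG}{F}$ up to sign consistent with the degree conventions), while in degrees $1$ and $2$ everything vanishes on both sides, so the chain map condition is trivial there.

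Next I would confirm that $J$, as defined, is an antisymmetric chain homotopy with the correct source and target. Antisymmetry of $J_{F,G,H} = -\ip{F}\ip{G}\ip{H}\omega$ in all three arguments is immediate from the antisymmetry of $\omega$. That it has the right source and target — i.e.\ that $dJ_{F,G,H}$ is exactly the discrepancy $\Sbrac{F}{\Sbrac{G}{H}} - \Sbrac{\Sbrac{F}{G}}{H} - \Sbrac{G}{\Sbrac{F}{H}}$ at the level of $0$-chains — is precisely Prop.\ \ref{semi-bracket}.3. One also needs the degree-1 components of the source and target of the Jacobiator to match: these involve brackets like $\Sbrac{F}{\Sbrac{G}{f}}$, which all vanish since the semi-bracket is zero in degrees $1$ and $2$, so $J$ restricted to triples with one argument in $\cinf$ must be zero, consistent with $J$ being defined only as a trilinear map on $\ham \times \ham \times \ham$.

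The main work, and the step I expect to be the real obstacle, is verifying that the four coherence diagrams in the definition commute. Since the alternator $S$ is the identity, the second, third, and fourth diagrams collapse substantially: every edge labeled by an $S$ becomes an identity, so the third diagram reduces to $J_{x,y,z} = J_{x,y,z}$ composed appropriately, and the fourth reduces to a statement that is automatic once $S = 1$. The genuine content is the first (pentagon-like) diagram, which becomes the statement that the Jacobiator satisfies the higher coherence law for an $L_\infty$-algebra — explicitly, an identity among the various $\ip{}\ip{}\ip{}\ip{}\omega$ expressions obtained by expanding $J_{[w,x],y,z}$, $[J_{w,x,y},z]$, and their permutations. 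I would prove this by translating every term via $J_{F,G,H} = -\ip{F}\ip{G}\ip{H}\omega$ and $v_{\Sbrac{F}{G}} = [v_F,v_G]$ into interior products of $\omega$ with the Hamiltonian vector fields, then using repeatedly the identity $\iota_{[v,w]} = L_v \iota_w - \iota_w L_v$, Weil's formula $L_v = d\ip{}\ + \ip{}d$, and $d\omega = 0$ to collapse the alternating sum to zero; this is exactly the computation that shows the truncated $\omega$-twisted bracket defines an $L_\infty$-structure, and it is bookkeeping-heavy but mechanical. Finally I would note that when the alternator vanishes the antisymmetry of $J$ in all arguments (already checked) makes this a semistrict Lie 2-algebra in the sense of Baez and Crans, completing the proof.
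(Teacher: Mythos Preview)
Your proposal is correct and follows essentially the same route as the paper: reduce the chain-homotopy conditions for $S$ and $J$ to Prop.\ \ref{semi-bracket}, note that diagrams two through four collapse because the alternator is the identity and $J$ is totally antisymmetric, and verify the first (pentagon) diagram by a direct Cartan-calculus computation. The paper organizes that last computation by isolating two auxiliary identities involving the Lie derivative $L_{v_G} J_{K,F,H}$ as an intermediate quantity, while you propose to expand everything via $\iota_{[v,w]} = L_v\iota_w - \iota_w L_v$ and $d\omega = 0$ directly; these amount to the same calculation.

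One point to clean up: your first paragraph misidentifies where the chain-map condition has content. For a map $L\otimes L \to L$ between $2$-term complexes, the condition in total degree $0$ is vacuous, since there is no differential out of $L_0$; the expression ``$d\Sbrac{F}{G} = \Sbrac{F}{dG}$'' you write is not the chain-map condition (and $dG$ here is a $2$-form, not an element of the complex). What Prop.\ \ref{semi-bracket}.1 actually gives you is that $\Sbrac{F}{G}$ lands in $\ham$, i.e.\ the bracket is well defined on $0$-chains. The genuine chain-map checks are in degree $1$, namely $d\Sbrac{F}{f} = \Sbrac{F}{df}$ and $d\Sbrac{f}{F} = \Sbrac{df}{F}$; both sides vanish because the semi-bracket is zero in degree $1$ and because $v_{df}=0$, exactly as you note. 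This does not affect the validity of your argument, but the exposition should be corrected.
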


\begin{proof}
We note from Prop\ \ref{semi-bracket}\comment{.2} that the semi-bracket
is antisymmetric.  Since both $S$ and the degree $1$ chain map are zero,
the alternator defined above is a chain homotopy with the right source
and target.  It follows from Prop.\ \ref{semi-bracket}\comment{.3} and that
the Hamiltonian vector field of an exact $1$-form is zero that the Jacobiator
is also a chain homotopy with the desired source and target.
So again, we just need to check that the Lie 2-algebra axioms hold.
The following identities can be checked by simple calculation, and the
commutativity of the first diagram follows:
\[J_{\Sbrac{K}{F},G,H} = 
J_{\Sbrac{H}{K},F,G} - J_{\Sbrac{F}{H},G,K} - \Lie{G}{J_{K,F,H}}\]
\[\Lie{G}{J_{K,F,H}} = 
J_{\Sbrac{G}{K},F,H} + J_{K,\Sbrac{G}{F},H} + J_{K,F,\Sbrac{G}{H}}.\]
Since the Jacobiator is antisymmetric and the alternator is the identity,
the second and third diagrams commute as well.  The fourth diagram
commutes because all the edges are identity morphisms.
\end{proof}

\begin{definition}
Given Lie 2-algebras $L$ and $L'$ with bracket, alternator and 
Jacobiator $\blankbrac$, $S$, $J$ and $\blankbrac^\prime$, $S^\prime$, 
$J^\prime$ respectively, a {\bf homomorphism} from $L$ to $L'$
consists of:
\begin{itemize}
\item a chain map $\phi \maps L \to L'$, and 
\item a chain homotopy $\Phi \maps\blankbrac^\prime\circ (\phi\otimes \phi)
\Rightarrow \phi \circ \blankbrac$
\end{itemize}
such that the following diagrams commute:
\[ \def\objectstyle{\scriptstyle}
  \def\labelstyle{\scriptstyle}
   \xy
 (-25,20)*{[\phi(x),\phi(y)]' \;}="1"; 
 (15,20)*{\; \phi([x,y])}="2";
 (-25,-20)*{-[\phi(y),\phi(x)]' \;}="3"; 
 (15,-20)*{\; -\phi([y,x])}="4";
{\ar^{\Phi_{x,y}}   "1";"2"};
{\ar_{-\Phi_{y,x}} "3";"4"};
{\ar_{S^\prime_{\phi(x),\phi(y)}} "1";"3"};
{\ar^{\phi(S_{x,y})}  "2";"4"};
\endxy
\]

\[ \def\objectstyle{\scriptstyle}
  \def\labelstyle{\scriptstyle}
   \xy
(-25,20)*{[\phi(x),[\phi(y),\phi(z)]^\prime]^\prime \;}="1"; 
(35,20)*{\; [[\phi(x),\phi(y)]^\prime,\phi(z)]^\prime + 
[\phi(y),[\phi(x),\phi(z)]^\prime]^\prime}="2";
 (-25,0)*{[\phi(x),\phi([y,z])]^\prime}="3"; 
(35,0)*{[\phi([y,x]),\phi(z)]^\prime + 
[\phi(y),\phi([x,z])]^\prime}="4";
 (-25,-20)*{\phi([x,[y,z]])\; }="5"; 
(35,-20)*{\; \phi([[x,y],z] + [y,[x,z]])}="6";
{\ar^(.4){J^\prime_{\phi(x),\phi(y),\phi(z)}}   "1";"2"};
{\ar_{[\phi(x),\Phi_{y,z}]^\prime} "1";"3"};
{\ar^{[\Phi_{x,y},\phi(z)]^\prime + 
[\phi(y),\Phi_{x,z}]^\prime} "2";"4"};
{\ar_{\Phi_{x,[y,z]}}  "3";"5"};
{\ar^{\Phi_{[x,y],z} + \Phi_{y,[x,z]}} "4";"6"};
{\ar_{\phi(J_{x,y,z})} "5";"6"};
\endxy
\]
\end{definition}

Roytenberg explains how to compose Lie 2-algebra homomorphisms 
\cite{Roytenberg}, and we say a Lie 2-algebra homomorphism with an
inverse is an {\bf isomorphism}.

\begin{theorem}
\label{isomorphism}
$L(X,\omega)_\h$ and $L(X,\omega)_\s$ are isomorphic as Lie 2-algebras.
\end{theorem}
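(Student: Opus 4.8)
The plan is to build an explicit isomorphism $(\phi,\Phi)\maps L(X,\omega)_\h\to L(X,\omega)_\s$ whose chain-map part $\phi$ is the identity. Since both Lie 2-algebras have the same underlying 2-term complex $\ham\stackrel{d}{\leftarrow}\cinf$, taking $\phi=\mathrm{id}$ is the obvious first move, and it is trivially a chain map with an inverse. The content is therefore entirely in the chain homotopy $\Phi\maps \Sblank\circ(\phi\otimes\phi)\Rightarrow\phi\circ\Hblank$. Unwinding definitions, $\Phi$ must be a degree-$1$ map assigning to each pair $F,G\in\ham$ a function $\Phi_{F,G}\in\cinf$ with $d\Phi_{F,G}=\Hbrac{F}{G}-\Sbrac{F}{G}$. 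But Prop.\ \ref{bracket_relation} tells us exactly that $\Hbrac{F}{G}-\Sbrac{F}{G}=d\,\ip{F}G$, so the natural choice is
\[
    \Phi_{F,G} = \ip{F}G .
\]
One must also check $\Phi$ is a chain homotopy in degree $1$, i.e.\ that it intertwines the two brackets on mixed degree-$0$/degree-$1$ inputs; since $\Hbrac{F}{f}=\lie{F}f=\ip{F}df$ and $\Sbrac{F}{f}=0$, $\Hbrac{f}{F}=0=\Sbrac{f}{F}$, this amounts to the bookkeeping identities $\Phi_{F,df}=\ip{F}df$ and $\Phi_{df,F}=0$ (the latter because $v_{df}=0$), which hold on the nose.

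The substantive work is verifying the two coherence diagrams in the definition of a homomorphism. For the first (the alternator diagram): with $\phi=\mathrm{id}$, $S'_{F,G}=0$, and $\phi(S_{F,G})=S_{F,G}=-(\ip{F}G+\ip{G}F)$, commutativity says
\[
    \Phi_{F,G} + (-\Phi_{G,F}) = S_{F,G},
\]
i.e.\ $\ip{F}G - \ip{G}F$ should equal $-(\ip{F}G+\ip{G}F)$ — which is false as stated, so I expect the diagram actually reads $\Phi_{F,G}+\phi(S_{F,G})$ along one route versus $S'_{F,G}-\Phi_{G,F}$ along the other, giving $\ip{F}G -(\ip{F}G+\ip{G}F) = -\ip{G}F = 0 - \Phi_{G,F}$, which holds identically. (I will be careful to trace the diagram's arrows in the exact direction drawn.) The second diagram, relating the two Jacobiators through $\Phi$, is the main obstacle: here $J'_{F,G,H}=-\ip{F}\ip{G}\ip{H}\omega$ while $\phi(J_{F,G,H})=J_{F,G,H}=0$, so chasing the hexagon must produce the identity
\[
    J'_{F,G,H} = [F,\Phi_{G,H}]_{\mathrm{s}} \text{-route terms} \;-\; \Phi\text{-route terms involving } \Phi_{[F,G],H},\ \Phi_{G,[F,H]},\ \Phi_{F,[G,H]}.
\]
Concretely this reduces to showing
\[
    \Sbrac{F}{\ip{G}H} - \ip{[v_F,v_G]}H - \ip{v_G}\Hbrac{F}{H} + \ip{F}\lie{G}H \;=\; \ip{F}\Hbrac{G}{H} - \ip{F}\ip{G}\ip{H}\omega,
\]
or some sign variant thereof, which I would prove by expanding every $\Hblank$ via $\lie{}{} = \iota d + d\iota$, every $\Sblank$ via $\ip{G}\ip{F}\omega = -\ip{G}dF$, and using Prop.\ \ref{bracket_relation}, Prop.\ \ref{hemi-bracket}(1) ($v_{\Hbrac{F}{G}}=[v_F,v_G]$), Prop.\ \ref{semi-bracket}(1), and the commutator identity $\iota_{[v_F,v_G]}=\lie{F}\ip{G}-\ip{G}\lie{F}$ — the same toolkit used in the proofs of Props.\ \ref{hemi-bracket} and \ref{semi-bracket}. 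I anticipate the bulk of the argument is this single diagram chase; once the cancellations are arranged the identity drops out, and since $\phi=\mathrm{id}$ is visibly invertible and $\Phi$ will turn out to be an isomorphism of chain homotopies (it has the obvious inverse chain homotopy $-\Phi$ on the reverse homomorphism), the conclusion follows.
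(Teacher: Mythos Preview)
Your approach is exactly the paper's: take $\phi=\mathrm{id}$ and $\Phi_{F,G}=\ip{F}G$, verify the chain-homotopy conditions via Prop.~\ref{bracket_relation}, then check the two coherence diagrams using the Cartan calculus identities and Prop.~\ref{hemi-bracket}(1). Your self-corrected reading of the alternator diagram is right, and your claim that the inverse is $(\mathrm{id},-\Phi)$ is what the paper uses.

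One correction worth flagging, since you already suspected your displayed Jacobiator identity was off: tracing the second diagram carefully (remembering that the semi-bracket vanishes in degree~1, so the edges $[\phi(x),\Phi_{y,z}]'$, $[\Phi_{x,y},\phi(z)]'$, $[\phi(y),\Phi_{x,z}]'$ contribute nothing) the equation to prove is simply
\[
J'_{F,G,H} + \Phi_{\Hbrac{F}{G},H} + \Phi_{G,\Hbrac{F}{H}} - \Phi_{F,\Hbrac{G}{H}} = 0,
\]
i.e.\ $-\ip{F}\ip{G}\ip{H}\omega + \iota_{[v_F,v_G]}H + \ip{G}\lie{F}H - \ip{F}\lie{G}H = 0$. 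Your displayed version dropped the $\ip{F}\lie{G}H$ term from the right-hand side. With that term restored, the commutator identity $\iota_{[v_F,v_G]}=\lie{F}\ip{G}-\ip{G}\lie{F}$ and Weil's formula collapse everything to $d\ip{F}\ip{G}H$, which vanishes because $\ip{G}H$ is a $0$-form.
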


\begin{proof}
We show that the identity chain maps with appropriate chain homotopies
define Lie 2-algebra homomorphisms and that their composites are the
respective identity homomorphisms.  There is a homomorphism $\phi \maps
L(X,\omega)_\h \to L(X,\omega)_\s$ with the identity chain map and the
chain homotopy given by $\Phi_{F,G} = \ip{F}G$.  That this is a chain
homotopy follows from the bracket relation $\Sbrac{F}{G} + d\left(\ip{F}G\right)
= \Hbrac{F}{G}$ noted in Prop.\ \ref{bracket_relation} together
with the equations
$$ \Sbrac{F}{f} + \ip{F}df = \Hbrac{F}{f}, \qquad
\Sbrac{f}{F} = \Hbrac{f}{F} = \ip{df}F = 0. $$
We check that the two diagrams in the definition 
of a Lie 2-algebra homomorphism commute.
Noting that the chain map is the identity, the commutativity of the
first diagram is easily checked by recalling that $S_{F,G} = -(\ip{G}F
+ \ip{F}G)$ and that $S'_{F,G}$ is the identity.  Noting that any edge
given by the bracket for $L(X,\omega)_\s$ in degree $1$ is the identity 
and that $J_{F,G,H}$ is the identity, to check the commutativity of the second
diagram we only need to perform the following calculation:
\begin{eqnarray*}
&&J'_{F,G,H} + \Phi_{\Hbrac{F}{G},H} + \Phi_{G,\Hbrac{F}{H}} -
\Phi_{F,\Hbrac{G}{H}} \\ 
&=& -\ip{F}\ip{G}\ip{H}\omega + \iota_{\brac{v_F}{v_G}}H + \ip{G}\Lie{F}{H} - \ip{F}\Lie{G}{H}\\
&=& \ip{F}\Lie{G}{H} - \ip{F}d\ip{G}H + \iota_{\brac{v_F}{v_G}}H + 
\ip{G}\Lie{F}{H} - \ip{F}\Lie{G}{H}\\
&=& -\ip{F}d\ip{G}H + \iota_{\brac{v_F}{v_G}}H + \ip{G}\Lie{F}{H}\\
&=& -\ip{F}d\ip{G}H + \Lie{F}{\ip{G}H} - \ip{G}\Lie{F}{H} + \ip{G}\Lie{F}{H}\\
&=& -\ip{F}d\ip{G}H + \Lie{F}{\ip{G}H}\\
&=& d\ip{F}\ip{G}H - \Lie{F}\ip{G}H + \Lie{F}{\ip{G}H}\\
&=& d\ip{F}\ip{G}H\\
&=& 0 .
\end{eqnarray*}
\end{proof}

\section{The Classical Bosonic String} 
\label{strings}

The bosonic string is a theory of maps $\phi \maps \Sigma \to M$ where
$\Sigma$ is a surface and $M$ is some manifold representing spacetime.
For simplicity we will only consider the case where $\Sigma$ is the
cylinder $\CYL$ and $M$ is $d$-dimensional Minkowski spacetime, $\ST$.
A solution of the classical bosonic string is then a map $\phi \maps
\Sigma \to M$ which is a critical point of the area subject to certain
boundary conditions.

Equivalently, by exploiting symmetries
in the variational problem, one can describe solutions $\phi$ by equipping
$\CYL$ with its standard Minkowski metric and then solving the
$1+1$ dimensional field theory specified by the Lagrangian density
\begin{eqnarray*}
\ell=\frac{1}{2} g^{ij}\eta_{ab} \frac{\partial
  \phi^{a}}{\partial q^i}\frac{\partial \phi^{b}}{\partial q^j}.
\end{eqnarray*}
Here $q^i$ $(i = 0,1)$ are standard coordinates on $\CYL$ and
$g=\mathrm{diag}(1,-1)$ is the Minkowski metric on $\CYL$, while
$\phi^a$ are the coordinates of the map $\phi$ in $\ST$ and $\eta =
\mathrm{diag}(1,-1,\hdots,-1)$ is the Minkowski metric on $\ST$.  We
use the Einstein summation convention to sum over repeated indices.
The corresponding Euler--Lagrange equation is just a version of the
wave equation:
\begin{eqnarray*}
g^{ij}\partial_{i} \partial_{j} \phi^a =0.
\label{eom}
\end{eqnarray*}
We next describe this theory using multisymplectic geometry following
the approach of H\'{e}lein \cite{Helein}.  (The work of Gotay {\it et
al} \cite{GIMM} focuses instead on the Polyakov approach, where the
metric on $\Sigma$ is taken as an independent variable.)

The space $E=\Sigma \times M$ can be thought of as a trivial bundle
over $\Sigma$, and the graph of a function $\phi \maps \Sigma \to M$
is a smooth section of $E$.  We write the coordinates of a
point $(x,u)\in E$ as $\left(q^i,u^a \right)$.
Let $J^1 E \to E$ be the first jet bundle of $E$.
As explained in Example \ref{jet.bundle}, since $E$ is trivial
we may regard $J^1 E$ as a vector bundle whose
fiber over $(x,u)\in E$ is $T^*_x \Sigma \otimes T_u M$.  
The Lagrangian density for
the string can be defined as a smooth function on $J^1 E$:
\begin{eqnarray*}
\ell=\frac{1}{2} g^{ij}\eta_{ab}u^{a}_{i}u^{b}_{j},
\end{eqnarray*}
which depends in this example only on the fiber coordinates
$u^a_{i}$.  

Let $J^1 E^* \to E$ be the vector bundle dual to $J^1 E$.  
The fiber of $J^1 E^*$ over $(x,u) \in E$ is $T_x \Sigma \otimes 
T_u^* M$.  From the Lagrangian $\ell \maps J^1E \to \R$, the `de 
Donder--Weyl Hamiltonian' $h \maps J^1 E^* \to \R$ can be
constructed via a Legendre transform.   It is given as follows:
\begin{eqnarray*}
h&=& p^{i}_{a}u^{a}_{i}- \ell \\
&=&\frac{1}{2} \eta^{ab}g_{ij}p_{a}^{i}p_{b}^{j},
\end{eqnarray*}
where $u^a_{i}$ are defined implicitly by
$p_a^{i}=\partial \ell / \partial u^{a}_{i}$,
and $p_a^{i}$ are coordinates on the
fiber $T^{\ast}_{u}M \otimes T_{x}\Sigma$.
Note that $h$ differs from the standard (non-covariant) Hamiltonian density
$\varepsilon$ for a field theory: 
\begin{eqnarray*}
\varepsilon&=& p^{0}_{a}u^{a}_{0} - \ell\\
&=& \frac{1}{2} \eta^{ab}\left
(p_{a}^{0}p_{b}^{0} + p_{a}^{1}p_{b}^{1} \right).
\end{eqnarray*}

Let $\phi$ be a section of $E$ and let $\pi$ be a smooth section of
$J^1 E^*$ restricted to $\phi(\Sigma)$ with fiber coordinates
$\pi_{a}^{i}$.  It is then straightforward to show
that $\phi$ is a solution of the Euler--Lagrange equations 
if and only if $\phi$ and $\pi$ satisfy the following system of
equations:
\begin{eqnarray}
\frac{\partial \pi^{i}_{a}}{\partial q^i} &=& 
  -  \left.\frac{\partial h}{\partial
  u^{a}} \right \vert_{u=\phi,p=\pi}  \\
  \frac{\partial \phi^{a}}{\partial q^i} &=& \left.\frac{\partial h}{\partial
  p^i_{a}} \right \vert_{u=\phi,p=\pi}. \label{ham_eqs}
\end{eqnarray}
This system of equations is a generalization of Hamilton's equations
for a classical point particle.

As explained in Example \ref{jet.bundle} and the preceding discussion,
the extended phase space for the string is the first cojet bundle 
$J^1 E^\star$, and this space is equipped with a canonical 2-form $\theta$ 
whose exterior derivative $\omega = d \theta$ is a 2-plectic structure.
Using the isomorphism
\[         J^1 E^\star \cong J^1 E^* \times \R ,\]
a point in $J^1 E^\star$ gets coordinates $(q^i,u^a,p^{i}_{a},e)$. 
In terms of these coordinates, 
\[
\theta= e ~dq^{0} \wedge dq^{1} +
\left(p_{a}^{0} du^{a} \wedge dq^{1} - 
p_{a}^{1} du^{a} \wedge dq^{0} \right) .
\]
The 2-plectic structure on $J^1 E^\star$ is thus
\[   \omega = 
de \wedge dq^0 \wedge dq^{1} +
\left(dp_a^0 \wedge du^a \wedge dq^{1} - 
dp_a^1 \wedge du^a \wedge dq^{0} \right) .  \]
So, the variable $e$ may be considered as `canonically conjugate' 
to the area form $dq^{0} \wedge dq^{1}$.  

As before, let $\phi$ be a section of $E$ and let $\pi$ be a smooth section of
$J^1 E^*$ restricted to $\phi(\Sigma)$.  
Consider the submanifold $S \subset J^1 E^\star$ with coordinates:
\begin{eqnarray*}
(q^i,\phi^{a}(q^j),\pi_{a}^{i}(q^j),-h).
\end{eqnarray*}
Note that $S$ is constructed from $\phi$, $\pi$ and
from the constraint $e + h=0$. This constraint is analogous to the one
that is used in finding constant energy solutions in the extended
phase space approach to classical mechanics.  At each point in
$S$, a tangent bivector $v=v_{0} \wedge v_{1}$ can be defined as
\[
\begin{array}{ccl}
 v_{0} &=& {\displaystyle \frac{\partial}{\partial q^{0}} + 
\frac{\partial \phi^a}{\partial q^{0}}\frac{\partial}{\partial u^a} + 
\frac{\partial \pi_{a}^{i}}{\partial q^{0}}
\frac{\partial}{\partial p_{a}^{i}}} \\   \\
 v_{1} &=& {\displaystyle \frac{\partial}{\partial q^{1}} + 
\frac{\partial \phi^a}{\partial q^{1}}
\frac{\partial}{\partial u^a} + 
\frac{\partial \pi_{a}^{i}}{\partial q^{1}}
\frac{\partial}{\partial p_{a}^{i}}. } 
\end{array}
\]
Explicit computation reveals that the submanifold $S$ is
generated by solutions to Hamilton's equations if and only if
\[
\omega(v_{0},v_{1},\cdot)=0. 
\]

Quite generally, infinitesimal symmetries of the 2-form $\theta$ give
rise to Hamiltonian 1-forms that generate these symmetries.  For
example, symmetry under time evolution lets us define a Hamiltonian.
Consider the Lie derivative of $\theta$ along the coordinate vector field
$\partial/\partial q^{0}$:
\[ 
\begin{array}{ccl}
L_{\partial/\partial q^{0}} \theta &=&
{\displaystyle d\iota_{\partial/\partial q^{0}}\theta + 
\iota_{\partial/\partial q^{0}}\omega } \\  
&=& d \left( e~dq^{1} + p_{a}^{1}du^{a} \right)  - \left(de \wedge dq^{1} + 
dp_{a}^{1} \wedge du^{a} \right)  \\  
&=&0. \label{lie_theta}
\end{array}
\]
Hence $\theta$ is invariant with respect to infinitesimal
displacements along the $q^{0}$ coordinate.  If we define
a 1-form $H$ by
\[
\begin{array}{ccl}
H&=&{\displaystyle -\iota_{\partial/ \partial q^{0}} \theta} \\ 
&=&-e~dq^{1} - p_{a}^{1} du^{a}.
\end{array}
\]
then $dH = \iota_{\partial/\partial q^0} \omega$.  Hence $H$ is
a Hamiltonian 1-form, and the Hamiltonian vector
field $v_H$ describes time evolution.

One may wonder how this Hamiltonian 1-form $H$ is related to the
usual concept of energy.  To understand this,
consider the solution submanifold $S$ as defined
above. Let $S_{\tau} \subset S$ be a 1-dimensional curve
on $S$ at constant `time' $q^{0}=\tau$. Denote the restriction 
of $H$ to $S_{\tau}$ as $H_{\tau}$.  A computation yields
\begin{eqnarray*}
H_{\tau}&=& h~dq^{1} - \pi_{a}^{1} d \phi^a\\
&=&\frac{1}{2} \eta^{ab}g_{ij}
\pi_{a}^{i}\pi_{b}^{j}d q^{1}
- \pi_{a}^{1} d \phi^a.
\end{eqnarray*}
On $S_{\tau}$, $dq^{0}=0$. Hence $d\phi^{a} = \frac{\partial
\phi^{a}}{\partial q^1} dq^1 $.  Since $\phi$ satisfies
Eq.\ \eqref{ham_eqs}, we also have:
\begin{eqnarray*}
\pi^{0}_{a} = \eta_{ab}\frac{ \partial \phi^{b}}{\partial q^{0}},\\
\pi^{1}_{a} = - \eta_{ab}\frac{ \partial \phi^{b}}{\partial q^{1}}.
\end{eqnarray*}
The expression for $H_{\tau}$ thus becomes:
\begin{eqnarray*}
H_{\tau} &=& \frac{1}{2} \eta^{ab}\left
(\pi_{a}^{0}\pi_{b}^{0} + \pi_{a}^{1}\pi_{b}^{1}
\right) dq^{1}\\ 
&=&\frac{1}{2} \eta_{ab}\left (\frac{\partial
\phi^{a}}{\partial q^{0}} \frac{\partial \phi^{b}}{\partial q^{0}} + \frac{\partial
\phi^{a}}{\partial q^{1}}\frac{\partial \phi^{b}}{\partial q^{1}}
\right) dq^{1} \\
&=& \varepsilon ~ d q^{1}.
\end{eqnarray*}
Hence $H_{\tau}$ is the Hamiltonian 1-form that corresponds to the
energy density of the string at $\tau$, and the total energy of
the string at $q^{0}=\tau$ is simply:
\begin{eqnarray*}
\int_{S_{\tau}} H_{\tau}.    
\end{eqnarray*}
So, the usual concept of energy is compatible with the concept of
energy as a Hamiltonian 1-form in the Lie 2-algebra of observables
for the string.

We next consider a scenario in which the string is coupled to a 
a $B$ field.  We fix a 2-form $B$ on $M$.  By pulling back $dB$ along the 
projection $p \maps J^1 E^\star \to M$ and adding it to the 2-plectic form 
$\omega$, we obtain a modified 2-plectic form $\tilde{\omega}$ on $J^1 E^\star$: 
\[ \tilde{\omega} = \omega + p^* dB. \]
In coordinates:
\begin{eqnarray*}
p^* dB= d \left(B_{bc} \, du^b \wedge du^c \right)=
\frac{\partial B_{bc}}{\partial u^{a}} \, du^{a} \wedge du^{b} \wedge du^{c}.
\end{eqnarray*} 
It is straightforward to show that $\tilde{\omega}$ is indeed
2-plectic.

We now determine the equations of motion for the string coming from
the modified 2-plectic structure $\tilde{\omega}$. 
As before, we consider the submanifold $S$ defined above.
We emphasize that we have not changed $h$: it is still the
de Donder--Weyl Hamiltonian for the free string.  Requiring 
$\tilde{\omega}(v_{0},v_{1},\cdot)=0$ implies
\[
\omega(v_{0},v_{1},\cdot)+
p^* dB(v_{0},v_{1},\cdot)=0.
\]
Let 
\[ 
 J^{bc}= \frac{\partial \phi^{b}}{\partial q^0}\frac{\partial
  \phi^{c}}{\partial q^1}
 -\frac{\partial \phi^{b}}{\partial q^1}\frac{\partial
  \phi^{c}}{\partial q^0}
\] 
and 
\[ 
F_{bcd}=
 \frac{\partial B_{cd}}{\partial u^b} + 
 \frac{\partial B_{db}}{\partial u^c} +
 \frac{\partial B_{bc}}{\partial u^d}. 
\]
It follows that
\[
(p^* dB)(v_{0},v_{1},\cdot) = J^{bc} F_{bcd} du^{d}, 
\]
which implies that $\phi$ obeys the following equations:
\[
g^{ij}\partial_{i}\partial_{j}\phi^{a}=
\eta^{ad}J^{bc} F_{bcd}. 
\]
These equations, familiar from the work of Kalb and Ramond 
\cite{Kalb-Ramond}, are precisely the Euler--Lagrange equations 
derived from a Lagrangian density $\tilde{\ell}$ that includes 
a $B$ field term:
\[
\tilde{\ell}= \ell + J^{ab}B_{ab}.
\]
So, adding the pullback of $dB$ to $\omega$ modifies the 2-plectic
structure in precisely the right way to give the correct equations of
motion for a string coupled to a $B$ field.  This generalizes the
usual story for point particles coupled to electromagnetism \cite{GS}.

\section{Conclusions}
\label{conclusions}

The work presented here raises many questions.  Here are four
obvious ones:

\begin{itemize}
\item Does an $n$-plectic manifold give rise to a Lie $n$-algebra when
$n > 2$?  There is not yet a definition of weak or hemistrict Lie
$n$-algebras for $n > 2$, but a semistrict Lie $n$-algebra is just an
$n$-term chain complex equipped with the structure of an
$L_\infty$-algebra.  So, it would be easiest to start by considering a
generalization of the semi-bracket, and see if this can be used to
construct a semistrict Lie $n$-algebra.

\item Does the Lie 2-algebra of observables in 2-plectic geometry
extend to something like a Poisson algebra?  It is far from clear how
to define a product for Hamiltonian 1-forms, and the usual product of
a Hamiltonian 1-form and a smooth function is not Hamiltonian.

\item
The based loop space $\Omega X$ of a manifold $X$ equipped with a
closed $(n+1)$-form $\omega$ is an infinite-dimensional manifold 
equipped with a closed $n$-form $\eta$ defined `by transgression'
as follows: 
\[         \eta(v_1, \dots, v_n) = 
\int_0^{2\pi} 
\omega(\gamma'(\sigma),v_1(\gamma(\sigma)), \dots,
v_n(\gamma(\sigma)) \; d\sigma \]
where $v_i$ are tangent vectors at the loop $\gamma \in \Omega X$ and
$v_i(\gamma(\sigma))$ are the corresponding tangent vectors
at the point $\gamma(\sigma) \in X$.  Even when $\omega$ 
is $n$-plectic, $\eta$ is rarely $(n-1)$-plectic.  However
when $X = G$ is a compact simple Lie group equipped with the
2-plectic structure of Example \ref{lie.group},
$\eta$ becomes symplectic after adding an exact form.  
The interplay between the 2-plectic structure on $G$ and the 
symplectic structure on $\Omega G$ plays an important role in
the theory relating the Wess--Zumino--Witten model, central
extensions of the loop group $\Omega G$, gerbes on $G$ and
the string 2-groups ${\rm String}_k(G)$ \cite{BCSS}.  It would be
nice to have a more general theory whereby the loop space of
an $n$-plectic manifold became an $(n-1)$-plectic manifold.

\item When a symplectic structure $\omega$ on a manifold $X$ defines 
an integral class in $H^2(X,\R)$, there is a $\U(1)$ bundle over $X$
equipped with a connection whose curvature is $\omega$.  As mentioned
in the Introduction, this plays a fundamental role in the geometric 
quantization of $X$.  Similarly,
when a 2-plectic structure $\omega$ on a manifold $X$ defines an
integral class in $H^3(X,\R)$, there is a $\U(1)$ gerbe over $X$ equipped
with a connection whose curvature is $\omega$ \cite{Brylinski}.
Is there an analogue of geometric quantization that applies in
this case?  

Following the ideas of Freed \cite{Freed:1994}, we might hope that
geometrically quantizing this gerbe will give a `2-Hilbert space' of
states.  However, Freed's work only treats Schr\"odinger quantization,
and that only in the special case where the resulting 2-Hilbert space
is finite-dimensional.  Finite-dimensional 2-Hilbert spaces are by now
well-understood \cite{HDA2}, but the infinite-dimensional ones are
still being developed \cite{BBFW,Yetter}.  Geometric quantization for
gerbes is an even greater challenge.  However, we expect the problem
of geometrically quantizing a $\U(1)$ gerbe on $X$ to be closely
related to the better-understood problem of geometrically quantizing
the corresponding $\U(1)$ bundle on the loop space of $X$.
\end{itemize}

\section{Acknowledgments}
We thank Urs Schreiber, Allen Knutson and Dmitry Roytenberg
for corrections and helpful conversations.

\end{document}